\numberwithin{equation}{section}
\theoremstyle{plain}
\newtheorem{theorem}{Theorem}[subsection]
\newtheorem{corollary}{Corollary}[subsection]
\newtheorem{proposition}{Proposition}[subsection]
\theoremstyle{definition}
\newtheorem{definition}{Definition}[subsection]
\theoremstyle{remark}
\newtheorem{example}{Example}[subsection]
\newcommand{\affichesolution}{0}
\newcounter{solution}
\DeclareFontFamily{OT1}{wncyi}{}
\DeclareFontShape{OT1}{wncyi}{m}{it}{
<5> <6> <7> <8> <9> gen * wncyi
<10> <10.95> <12> <14.4> <17.28> <20.74> <24.88> wncyi10
}{}
\DeclareSymbolFont{cyrletters}{OT1}{wncyi}{m}{it}
\DeclareSymbolFontAlphabet{\cyrmath}{cyrletters}
\DeclareMathSymbol{\rE}{\cyrmath}{cyrletters}{003}
\DeclareMathSymbol{\rD}{\cyrmath}{cyrletters}{068}
\DeclareMathSymbol{\rG}{\cyrmath}{cyrletters}{017}
\DeclareMathSymbol{\rI}{\cyrmath}{cyrletters}{073}
\DeclareMathSymbol{\rL}{\cyrmath}{cyrletters}{076}
\DeclareMathSymbol{\rZ}{\cyrmath}{cyrletters}{090}
\renewcommand{\phi}{\varphi}
\newcommand{\Hc}{\mathcal{H}}
\newcommand{\Oc}{\mathcal{O}}
\newcommand{\Ic}{\mathcal{I}}
\newcommand{\Dc}{\mathcal{D}}
\newcommand{\Bc}{\mathcal{B}}
\newcommand{\Mc}{\mathcal{M}}
\newcommand{\Nc}{\mathcal{N}}
\newcommand{\Lc}{\mathcal{L}}
\newcommand{\Sc}{\mathcal{S}}
\newcommand{\Pc}{\mathcal{P}}
\newcommand{\Ec}{\mathcal{E}}
\newcommand{\Fc}{\mathcal{F}}
\newcommand{\Rc}{\mathcal{R}}
\newcommand{\Gc}{\mathcal{G}}
\newcommand{\Cc}{\mathcal{C}}
\newcommand{\Vc}{\mathcal{V}}
\newcommand{\Ac}{\mathcal{A}}
\newcommand{\Db}{\mathbb{D}}
\newcommand{\A}{\mathbb{A}}
\newcommand{\Lb}{\mathbb{L}}
\newcommand{\R}{\mathbb{R}}
\newcommand{\uR}{\underline{\mathbb{R}}}
\newcommand{\Z}{\mathbb{Z}}
\newcommand{\N}{\mathbb{N}}
\newcommand{\ev}{\mathrm{ev}}
\newcommand{\Spec}{\mathrm{Spec}}
\newcommand{\orient}{\mathrm{or}}
\newcommand{\uSpec}{\underline{\mathrm{Spec}}}
\newcommand{\Ker}{\mathrm{Ker}}
\newcommand{\Jet}{\mathrm{Jet}}
\newcommand{\Sym}{\mathrm{Sym}}
\newcommand{\uSol}{\underline{\mathrm{Sol}}}
\newcommand{\Sol}{\mathrm{Sol}}
\newcommand{\Solc}{\mathcal{S}ol}
\newcommand{\gfk}{\mathfrak{g}}
\newcommand{\Hom}{\mathrm{Hom}}
\newcommand{\RHom}{\R\mathrm{Hom}}
\newcommand{\Der}{\mathrm{Der}}
\newcommand{\uHom}{\underline{\mathrm{Hom}}}
\newcommand{\uGamma}{\underline{\Gamma}}
\newcommand{\Homc}{\mathcal{H}om}
\newcommand{\Endc}{\mathcal{E}nd}
\newcommand{\RHomc}{\R\mathcal{H}om}
\newcommand{\Extc}{\mathcal{E}xt}
\newcommand{\End}{\mathrm{End}}
\newcommand{\Forget}{\mathrm{Forget}}
\newcommand{\id}{\mathrm{id}}
\newcommand{\Ber}{\mathrm{Ber}}
\newcommand{\limind}{{\underset{\longrightarrow}{\mathrm{lim}}\,}}
\newcommand{\Fib}{\mathrm{Fib}}
\newcommand{\Cof}{\mathrm{Cof}}
\newcommand{\Weq}{\mathrm{W}}
\newcommand{\DR}{\mathrm{DR}}
\newcommand{\Lotimes}{\overset{\Lb}{\otimes}}
\newcommand{\Sets}{\textsc{Sets}}
\newcommand{\SSets}{\textsc{SSets}}
\newcommand{\Alg}{\textsc{Alg}}
\newcommand{\Mod}{\textsc{Mod}}
\newcommand{\Sh}{\textsc{Sh}}
\newcommand{\Legos}{\textsc{Legos}}
\newcommand{\Var}{\textsc{Var}}
\newcommand{\Open}{\textsc{Open}}
\newcommand{\Rings}{\textsc{Rings}}
\newcommand{\Vect}{\textsc{Vect}}
\title{Homotopical poisson reduction of gauge theories}
\author{Fr\'ed\'eric Paugam}
\begin{document}
\maketitle
\abstract{The classical Poisson reduction of a given Lagrangian system
with (local) gauge symmetries has to be done before its quantization.
We propose here a coordinate free and self-contained mathematical presentation
of the covariant Batalin-Vilkovisky Poisson reduction of a general gauge theory.
It was explained in physical terms (DeWitt indices) in Henneaux and Teitelboim's book
\cite{Henneaux-Teitelboim}.
It was studied in coordinates using jet spaces by Barnich-Brandt-Henneaux
\cite{Barnich-Brandt-Henneaux-I}, Stasheff \cite{Stasheff1},
Fulp-Lada-Stasheff \cite{fulp-lada-stasheff}, among others.
The main idea of our approach is to use the functor of point approach to spaces of fields
to gain coordinate free geometrical insights on the spaces in play, and to focus on the notion of Noether
identities, that is a simple replacement of the notion of gauge symmetry, harder to handle
algebraically. Our main results are a precise formulation and understanding of the optimal finiteness hypothesis
necessary for the existence of a solution of the classical master equation,
and an interpretation of the Batalin-Vilkovisky construction in the setting of homotopical
geometry of non-linear partial differential equations.}

\tableofcontents
\newpage

\section*{Introduction}
This paper gives a self contained and coordinate free presentation of the Batalin-Vilkovisky
formalism for homotopical Poisson reduction of gauge theories, in the setting of algebraic
non-linear analysis, expanding on (and giving full proofs for) the very short presentation given in
\cite{Fred-histories-and-observables}, Section 4.  We have tried to be as self-contained as possible
so there may be some repetitions. We refer to loc. cit. for further references on the various subjects treated here.
We also refer to \cite{Fred-cours-m2-physique} for a more complete and detailed account of this theory
and of its applications in physics.

\section{Lagrangian variational problems}
\label{variational-problems}
For the reader's convenience, we recall shortly the formulation summed-up in \cite{Fred-histories-and-observables}
and fully described in \cite{Fred-cours-m2-physique} of general variational problems, and its grounding
on functorial geometry. This is certainly useful, but not strictly necessary to understand our final results.

\begin{definition}
\label{Lagrangian-variational-problem-definition}
A \emph{Lagrangian variational problem} is made of the following data:
\begin{enumerate}
\item A space $M$ called the parameter space for trajectories,
\item A space $C$ called the configuration space for trajectories,
\item A morphism $\pi:C\to M$ (often supposed to be surjective),
\item  A subspace $H\subset \Gamma(M,C)$ of the space of sections of $\pi$
$$\Gamma(M,C):=\{x:M\to C,\;\pi\circ x=\id\},$$
called the space of histories,
\item A functional (partial function with a domain of definition)
$$S:H\to R$$
on histories with values in a space $R$ usually given by $\R$ (or $\R[[\hbar]]$) called the action functional.
\end{enumerate}
The main object of classical physics is the space
$$T=\{x\in H|\,d_{x}S=0\}$$
of critical points of the action functional in histories.
\end{definition}
Recall from loc. cit. that the word \emph{space} of this definition means essentially a sheaf
$$X:\Legos^{op}\to \Sets$$
on a category $\Legos$ of geometrical building blocs equiped with a Grothendieck topology $\tau$,
also called a \emph{space modeled on $(\Legos,\tau)$}.
Spaces that are locally representable are called \emph{manifolds} or \emph{varieties}.
To present also higher gauge theory examples, one has to work with homotopical spaces,
but we will not do that here.

We refer to Deligne-Morgan's lectures \cite{notes-on-supersymmetry} 
and Manin's book \cite{Manin-supergeometry} for an introduction to super-varieties.
The reference \cite{Fred-cours-m2-physique} gives a complete account of this in the
functorial setting.
We will work without further comments with the following types of spaces:
\begin{enumerate}
\item Smooth spaces (also called diffeologies), modeled on the category $\Legos=\Open_{\Cc^\infty}$ of
open subsets of $\R^n$ for varying $n$ with smooth maps between them.
\item Smoothly algebraic spaces, modeled on the category $\Legos=\Alg_{\Cc^\infty}^{op}$ opposite
to the category of Lawvere's smooth algebras (see \cite{Moerdijk-Reyes}
and \cite{Fred-cours-m2-physique}), with its Zariski topology.
\item Smooth super-spaces, modeled on the category $\Legos=\Open_{\Cc^\infty}^s$
of smooth open subsets of the super affine space $\R^{n|m}$.
\item Smoothly algebraic super-spaces, modeled on the category $\Legos=\Alg^s_{\Cc^\infty}$
of smooth super-algebras, described in \cite{Fred-cours-m2-physique}.
\end{enumerate}
All these types of spaces are useful (and actually necessary) to describe differential calculus
on spaces of maps between smooth super-manifolds in a proper mathematical setting.
If $M$ and $C$ are two varieties in the above sense, the space of maps
$$\uHom(M,C):\Legos^{op}\to \Sets$$
is defined by
$$\uHom(M,C)(U)=\Hom(M\times U,C).$$
If $\pi:C\to M$ is a morphism of varieties, the space $\uGamma(M,C)$ is simply the corresponding
subspace in $\uHom(M,C)$.

\section{Algebraic analysis of partial differential equations}
\label{coordinate-free-PDEs}
In this section, we present the natural coordinate free approach to partial differential equations,
in the settings of $\Dc$-modules and $\Dc$-algebras.
We refer to Schapira's survey \cite{Schapira-triangulated-analysis} for an efficient introduction
to the general methods of linear algebraic analysis on varieties and to
Beilinson and Drinfeld's book \cite{Beilinson-Drinfeld-Chiral} for the non-linear setting.
 
This section expands on the article \cite{Fred-histories-and-observables},
giving more details and explanations. In particular, we use systematically the functor of point approach
to spaces of fields, as explained in Section \ref{variational-problems}
(see also \cite{Fred-cours-m2-physique} for a complete treatment) without further comments.
This means that spaces of superfunctions are treated essentially as usual spaces,
and functionals are defined as partially defined functions between (functors of points of) usual spaces.
We use Beilinson and Drinfeld's functorial approach \cite{Beilinson-Drinfeld-Chiral}
to non-linear partial differential equations, and we relate this approach to ours.
We are also inspired by Vinogradov \cite{Vinogradov} and Krasilshchik and
Verbovetsky \cite{Krasilshchik-Verbovetsky-1998}.

\subsection{$\Dc$-modules and linear partial differential equations}
\label{intro-D-modules}
We refer to Schneiders' review \cite{Schneiders} and Kashiwara's book \cite{Kashiwara-D-modules}
for an introduction to $\Dc$-modules. We just recall here basic results, that are necessary
for our treatment of non-linear partial differential equations in Section \ref{non-linear-PDEs}.

Let $M$ be a smooth variety of dimension $n$ and $\Dc$ be the algebra of
differential operators on $M$.
Recall that locally on $M$, one can write an operator $P\in \Dc$
as a finite sum
$$P=\sum_{\alpha}a_{\alpha}\partial^\alpha$$
with $a_{\alpha}\in \Oc_{M}$,
$$\partial=(\partial_{1},\dots,\partial_{n}):\Oc_{M}\to \Oc_{M}^n$$
the universal derivation and $\alpha$ some multi-indices.

To write down the equation $Pf=0$ with $f$ in an $\Oc_{M}$-module $\Sc$,
one needs to define the universal derivation $\partial:\Sc\to \Sc^n$. This is equivalent
to giving $\Sc$ the structure of a $\Dc$-module. The solution space
of the equation with values in $\Sc$ is then given by
$$\Solc_{P}(\Sc):=\{f\in \Sc,\;Pf=0\}.$$
Remark that
$$\Solc_{P}:\Mod(\Dc)\to \Vect_{\R_{M}}$$
is a functor that one can think of as representing the space of
solutions of $P$. Denote $\Mc_{P}$ the cokernel of the $\Dc$-linear map
$$\Dc\overset{.P}{\longrightarrow}\Dc$$
given by right multiplication by $P$.
Applying the functor $\Homc_{\Mod(\Dc)}(-,\Sc)$ to the exact sequence
$$\Dc\overset{.P}{\longrightarrow}\Dc\longrightarrow \Mc_{P}\to 0,$$
we get the exact sequence
$$0\to \Homc_{\Mod(\Dc)}(\Mc_{P},\Sc)\to \Sc\overset{P.}{\longrightarrow} \Sc,$$
which gives a natural isomorphism
$$\Solc_{P}(\Sc)=\Homc_{\Mod(\Dc)}(\Mc_{P},\Sc).$$
This means that the $\Dc$-module $\Mc_{P}$ represents
the solution space of $P$, so that the category of $\Dc$-modules is a convenient
setting for the functor of point approach to linear partial differential equations.

Remark that it is even better to consider the derived solution space
$$\R\Solc_{P}(\Sc):=\RHomc_{\Mod(\Dc)}(\Mc_{P},\Sc)$$
because it encodes also information on the inhomogeneous
equation
$$Pf=g.$$
Indeed, applying $\Homc_{\Dc}(-,\Sc)$ to the exact sequences
$$
0\to \Ic_{P}\to \Dc\to \Mc_{P}\to 0
$$
$$
0\to \Nc_{P}\to \Dc\to \Ic_{P}\to 0
$$
where $\Ic_{P}$ is the image of $P$ and $\Nc_{P}$ is its kernel,
one gets the exact sequences
$$
0\to \Homc_{\Dc}(\Mc_{P},\Sc)\to \Sc\to \Homc_{\Dc}(\Ic_{P},\Sc)\to \Extc^1_{\Dc}(\Mc,\Sc)\to 0\\
$$
$$
0\to \Homc_{\Dc}(\Ic_{P},\Sc)\to \Sc \to \Homc_{\Dc}(\Nc_{P},\Sc)\to \Extc_{\Dc}(\Ic_{P},\Sc)\to 0
$$
If $Pf=g$, then $QPf=0$ for $Q\in \Dc$ implies $Qg=0$. The second exact sequence
implies that this system, called the algebraic compatibility condition for the inhomogeneous
equation $Pf=g$ is represented by the $\Dc$-module $\Ic_{P}$, because
$$
\Homc_{\Dc}(\Ic_{P},\Sc)=\{g\in \Sc,\;Q.g=0,\;\forall Q\in \Nc_{P}\}.
$$
The first exact sequence shows that $\Extc^1_{\Dc}(\Mc,\Sc)$ are classes of vectors $f\in \Sc$
satisfying the algebraic compatibility conditions modulo those for which the system is truly compatible.
Moreover, for $k\geq 1$, one has
$$\Extc^1_{\Dc}(\Ic_{P},\Sc)\cong \Extc^{k+1}_{\Dc}(\Mc_{P},\Sc)$$
so that all the $\Extc^k_{\Dc}(\Mc_{P},\Sc)$ give interesting information about the differential
operator $P$.

Recall that the sub-algebra $\Dc$ of $\End_{\R}(\Oc)$,
is generated by the left multiplication by functions in $\Oc_{M}$
and by the derivation induced by vector fields in $\Theta_{M}$.
There is a natural right action of $\Dc$ on the $\Oc$-module $\Omega^n_{M}$
by
$$\omega.\partial=-L_{\partial}\omega$$
with $L_{\partial}$ the Lie derivative.

There is a tensor product in the category $\Mod(\Dc)$ given by
$$\Mc\otimes \Nc:=\Mc\otimes_{\Oc}\Nc.$$
The $\Dc$-module structure on the tensor product is given
on vector fields $\partial\in \Theta_{M}$ by Leibniz's rule
$$\partial(m\otimes n)=(\partial m)\otimes n+m\otimes (\partial n).$$
There is also an internal homomorphism object $\Homc(\Mc,\Nc)$ given by
the $\Oc$-module $\Homc_{\Oc}(\Mc,\Nc)$ equipped with the action
of derivations $\partial\in \Theta_{M}$ by
$$\partial(f)(m)=\partial(f(m))-f(\partial m).$$

An important system is given by the $\Dc$-module
of functions $\Oc$, that can be presented by the De Rham complex
$$
\Dc\otimes\Theta_{M}\to \Dc\to \Oc \to 0,
$$
meaning that $\Oc$, as a $\Dc$-module, is the quotient of $\Dc$ by the sub-$\Dc$-module generated by
vector fields. The family of generators $\partial_{i}$ of the kernel of $\Dc\to \Oc$
form a regular sequence, i.e., for every $k=1,\dots,n$, $\partial_{k}$ is not
a zero divisor in $\Dc/(\partial_{1},\dots,\partial_{k-1})$ (where $\partial_{-1}=0$ by convention).
This implies (see Lang \cite{Lang}, XXI \S 4 for more details on Koszul resolutions)
the following:
\begin{proposition}
The natural map
$$\Sym_{(\Mod_{dg}(\Dc),\otimes)}([\Dc\otimes\Theta_{M}\to \Dc])\longrightarrow \Oc$$
is a quasi-isomorphism of dg-$\Dc$-modules. The left hand side gives a free
resolution of $\Oc$ as a $\Dc$-module called the universal Spencer complex.
\end{proposition}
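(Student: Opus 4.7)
The plan is to recognize the displayed complex, after unpacking $\Sym$ in the dg tensor category $(\Mod_{dg}(\Dc),\otimes_\Oc)$, as a Koszul-type complex associated to the regular sequence $\partial_1,\dots,\partial_n$ already produced in the paragraph preceding the statement, and then to invoke the standard homological fact (the one cited from Lang, XXI \S 4) that the Koszul complex on a regular sequence is a free resolution of the quotient.

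First, I would reduce to a local statement. The map is a morphism of dg-$\Dc$-modules, and checking it is a quasi-isomorphism is local on $M$, so I would pick coordinates $(x_1,\dots,x_n)$ on an open $U\subset M$ on which $\Theta_M$ becomes free over $\Oc_M$ with basis $\partial_1,\dots,\partial_n$. Next I would unwind $\Sym$: since $\Dc\otimes_\Oc\Theta_M$ sits in cohomological degree $-1$, the Koszul sign rule turns the ``symmetric'' power in degree $-k$ into an exterior power, giving $\Sym^k\big([\Dc\otimes_\Oc\Theta_M\to\Dc]\big)\cong \Dc\otimes_\Oc \Lambda^k\Theta_M$ as a left $\Dc$-module, while the induced differential is the unique $\Dc$-linear extension of the multiplication $\Dc\otimes_\Oc\Theta_M\to \Dc$. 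Explicitly, in the chosen frame,
$$d\big(P\otimes \partial_{i_1}\wedge\cdots\wedge\partial_{i_k}\big)=\sum_{j=1}^{k}(-1)^{j-1}\,(P\cdot\partial_{i_j})\otimes \partial_{i_1}\wedge\cdots\widehat{\partial_{i_j}}\cdots\wedge \partial_{i_k},$$
which is exactly the Koszul complex of the right $\Dc$-module $\Dc$ with respect to right multiplication by $\partial_1,\dots,\partial_n$.

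Since the text has just asserted that $\partial_1,\dots,\partial_n$ form a regular sequence in $\Dc$ (each $\partial_k$ is a non-zero-divisor modulo $(\partial_1,\dots,\partial_{k-1})$), the standard Koszul resolution theorem for regular sequences gives that this complex is acyclic in negative degrees and that its $0$-th cohomology is the quotient $\Dc/(\partial_1\Dc+\cdots+\partial_n\Dc)$. By the De Rham presentation $\Dc\otimes\Theta_M\to\Dc\to\Oc\to 0$ exhibited immediately before the proposition, this quotient is precisely $\Oc$, and under these identifications the augmentation agrees with the natural map of the statement. Each term $\Dc\otimes_\Oc\Lambda^k\Theta_M$ is a locally free left $\Dc$-module because $\Lambda^k\Theta_M$ is a locally free $\Oc$-module, giving the asserted free resolution. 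Finally, since the local quasi-isomorphisms are produced by a canonical, coordinate-free construction, they glue to a global quasi-isomorphism on $M$.

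The main obstacle is the bookkeeping in the first step: verifying that $\Sym$ in the dg symmetric monoidal structure on $(\Mod_{dg}(\Dc),\otimes_\Oc)$ really produces $\bigoplus_k \Dc\otimes_\Oc\Lambda^k\Theta_M$ with the Koszul differential, and in particular that the Leibniz-type $\Dc$-action on $\otimes_\Oc$ does not interfere with the identification of the differential as right multiplication by the $\partial_i$. Once that identification is in place, the regularity hypothesis collected just before the proposition does all the remaining work via the standard Koszul resolution argument.
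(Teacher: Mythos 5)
Your proof is correct and takes essentially the same route as the paper, which justifies the proposition solely by the preceding observation that the generators $\partial_{1},\dots,\partial_{n}$ of the kernel of $\Dc\to\Oc$ form a regular sequence, together with the reference to Lang, XXI \S 4, on Koszul resolutions. Your unpacking of $\Sym$ into the terms $\Dc\otimes_{\Oc}\wedge^{k}\Theta_{M}$ with the Koszul differential given by right multiplication by the $\partial_{i}$, followed by localization and gluing, simply makes explicit the identification the paper leaves implicit.
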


\begin{proposition}
The functor
$$\Mc\mapsto \Omega^n_{M}\otimes_{\Oc}\Mc$$
induces an equivalence of categories between the categories
$\Mod(\Dc)$ and $\Mod(\Dc^{op})$ of left and right $\Dc$-modules
whose quasi-inverse is
$$\Nc\mapsto \Homc_{\Oc_{M}}(\Omega^n_{M},\Nc).$$
The monoidal structure induced on $\Mod(\Dc^{op})$ by this equivalence is denoted $\otimes^!$.
\end{proposition}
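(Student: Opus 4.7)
The plan is to exhibit the two functors explicitly, verify in each case that the natural $\Oc$-linear construction upgrades to a $\Dc$- (resp.\ $\Dc^{op}$-) module structure, and then deduce that they are quasi-inverse from the fact that $\Omega^n_M$ is an invertible $\Oc_M$-module, hence dualizable with respect to $\otimes_\Oc$.

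First I would define, for a left $\Dc$-module $\Mc$, a right action of $\Theta_M$ on $F(\Mc):=\Omega^n_M\otimes_\Oc \Mc$ by
\[
(\omega\otimes m)\cdot\partial \;=\; (\omega\cdot\partial)\otimes m \;-\; \omega\otimes(\partial m) \;=\; -L_\partial\omega\otimes m-\omega\otimes\partial m,
\]
and check (i) that this is well defined modulo the relation $f\omega\otimes m=\omega\otimes fm$ for $f\in\Oc$, which is a one-line calculation using $L_\partial(f\omega)=(\partial f)\omega+fL_\partial\omega$ and the Leibniz rule $\partial(fm)=(\partial f)m+f\partial m$; (ii) that for $f\in\Oc$, $(\omega\otimes m)\cdot(f\partial)=((\omega\otimes m)\cdot\partial)\cdot f$ is compatible with multiplication in $\Dc$; and (iii) that the commutator $[\partial_1,\partial_2]$ defined by the two iterated actions agrees with the bracket of vector fields acting on the left, so the action extends from $\Theta_M$ and $\Oc$ to all of $\Dc$ by the presentation recalled just before the statement. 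Symmetrically, for a right $\Dc$-module $\Nc$, I would define a left action on $G(\Nc):=\Homc_\Oc(\Omega^n_M,\Nc)$ by
\[
(\partial f)(\omega) \;=\; f(\omega)\cdot\partial \;-\; f(\omega\cdot\partial) \;=\; f(\omega)\cdot\partial + f(L_\partial\omega),
\]
and make analogous verifications.

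The second step is to show $F$ and $G$ are mutually quasi-inverse. Here I would use that $\Omega^n_M$ is a locally free $\Oc_M$-module of rank $1$: fixing a local generator $\omega_0$ trivializes both $F$ and $G$ and identifies them, on underlying $\Oc$-modules, with tensoring by $\Omega^n_M$ and $(\Omega^n_M)^{\otimes -1}=\Homc_\Oc(\Omega^n_M,\Oc)$ respectively. The adjunction unit and counit
\[
\Mc\longrightarrow\Homc_\Oc\!\bigl(\Omega^n_M,\,\Omega^n_M\otimes_\Oc \Mc\bigr),\qquad \Omega^n_M\otimes_\Oc\Homc_\Oc(\Omega^n_M,\Nc)\longrightarrow \Nc
\]
are then $\Oc$-linear isomorphisms by invertibility, and it remains to check that they intertwine the $\Dc$-actions just constructed. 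This is a direct computation using the definitions of the actions in step one and is the least conceptual, most bookkeeping-heavy step; I expect it to be the main obstacle, but it reduces to checking compatibility on generators $\omega\otimes m\otimes\omega^\vee$ with $\langle\omega,\omega^\vee\rangle=1$ locally.

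Finally, the transported monoidal structure on $\Mod(\Dc^{op})$ is forced by $F(\Mc\otimes_\Oc\Nc)\cong F(\Mc)\otimes^!F(\Nc)$, and unwinding this gives the formula $\Mc'\otimes^!\Nc'=\Mc'\otimes_\Oc\Nc'\otimes_\Oc(\Omega^n_M)^{\otimes -1}$ with its induced right $\Dc$-action, which I would state as a corollary of the equivalence rather than reprove from scratch.
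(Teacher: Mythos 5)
The paper itself states this proposition without proof (it is quoted as a classical fact from the $\Dc$-module literature), so there is no "paper route" to compare with; your strategy — twist by the right action $\omega\cdot\partial=-L_{\partial}\omega$, check the Leibniz/bracket compatibilities so the $\Theta_{M}$- and $\Oc$-actions extend through the presentation of $\Dc$, then use that $\Omega^n_{M}$ is an invertible $\Oc_{M}$-module to see the unit and counit are isomorphisms — is exactly the standard argument, and your formula for the right action on $\Omega^n_{M}\otimes_{\Oc}\Mc$ is correct.

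There is, however, one concrete error: your displayed formula for the left action on $G(\Nc)=\Homc_{\Oc}(\Omega^n_{M},\Nc)$ has the two terms in the wrong order, and as written it does not define a left $\Dc$-module structure. With $(\partial\cdot f)(\omega)=f(\omega)\cdot\partial-f(\omega\cdot\partial)$ one computes
$$\partial_1\cdot(\partial_2\cdot f)-\partial_2\cdot(\partial_1\cdot f)=-\,[\partial_1,\partial_2]\cdot f,$$
so the analogue of your check (iii) fails and the action of vector fields cannot extend to $\Dc$. A quick test: for $\Nc=\Omega^n_{M}$ one has $\Homc_{\Oc}(\Omega^n_{M},\Omega^n_{M})\cong\Oc_{M}$ via $g\mapsto(\omega\mapsto g\omega)$, and your formula gives $\partial\cdot g=-\partial(g)$ rather than $\partial(g)$. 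The correct structure is
$$(\partial\cdot f)(\omega)=f(\omega\cdot\partial)-f(\omega)\cdot\partial=-f(L_{\partial}\omega)-f(\omega)\cdot\partial,$$
which does satisfy the bracket relation, recovers the usual action on $\Oc_{M}$ in the test above, and is the transposition to right modules of the paper's convention $\partial(f)(m)=\partial(f(m))-f(\partial m)$ for two left modules. This sign matters downstream: with the corrected action the unit $m\mapsto(\omega\mapsto\omega\otimes m)$ is $\Dc$-linear, as one checks directly, whereas with your formula it is not, so the quasi-inverse statement would not close up. Once this is fixed, the remainder of your argument — invertibility of $\Omega^n_{M}$, $\Dc$-linearity of unit and counit checked on local generators, and the transported monoidal structure $\Nc_1\otimes^!\Nc_2\cong\Nc_1\otimes_{\Oc}\Nc_2\otimes_{\Oc}\Homc_{\Oc}(\Omega^n_{M},\Oc)$ — goes through as you describe.
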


\begin{definition}
Let $\Sc$ be a right $\Dc$-module.
The \emph{De Rham functor} with values in $\Sc$ is the functor
$$\DR_{\Sc}:\Mod(\Dc)\to \Vect_{\R_{M}}$$
that sends a left $\Dc$-module to
$$\DR_{\Sc}(\Mc):=\Sc\Lotimes_{\Dc}\Mc.$$
The \emph{De Rham functor} with values in $\Sc=\Omega^n_{M}$
is denoted $\DR$ and simply called the De Rham functor.
One also denotes $\DR^r_{\Sc}(\Mc)=\Mc\Lotimes_{\Dc}\Sc$
if $\Sc$ is a fixed left $\Dc$-module and $\Mc$ is a varying
right $\Dc$-module, and $\DR^r:=\DR^r_{\Oc}$.
\end{definition}

\begin{proposition}
\label{universal-de-Rham}
The natural map
$$
\begin{array}{ccc}
\Omega^n_{M}\otimes_{\Oc}\Dc&\to& \Omega^n_{M}\\
\omega\otimes Q&\mapsto&\omega(Q)
\end{array}
$$
extends to a $\Dc^{op}$-linear quasi-isomorphism
$$\Omega^*_{M}\otimes_{\Oc}\Dc[n]\overset{\sim}{\to}\Omega^n_{M}.$$
\end{proposition}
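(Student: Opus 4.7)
The plan is to deduce the statement from the two preceding propositions: the universal Spencer quasi-isomorphism resolving $\Oc$ as a left $\Dc$-module, and the equivalence $\Mc\mapsto \Omega^n_M\otimes_\Oc\Mc$ between left and right $\Dc$-modules.

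First, we apply this equivalence to the Spencer quasi-isomorphism. Since $\Omega^n_M\otimes_\Oc -$ is an equivalence of abelian categories, it is exact and therefore sends the quasi-isomorphism
$$\Sym_{(\Mod_{dg}(\Dc),\otimes)}\bigl([\Dc\otimes\Theta_M\to\Dc]\bigr) \overset{\sim}{\longrightarrow} \Oc$$
to a quasi-isomorphism of complexes of right $\Dc$-modules
$$\Omega^n_M\otimes_\Oc \Sym\bigl([\Dc\otimes\Theta_M\to\Dc]\bigr) \overset{\sim}{\longrightarrow} \Omega^n_M\otimes_\Oc \Oc = \Omega^n_M,$$
whose term in cohomological degree $-k$ on the left is $\Omega^n_M\otimes_\Oc\Dc\otimes_\Oc \wedge^k\Theta_M$.

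Second, we identify this transported complex termwise with $\Omega^\bullet_M\otimes_\Oc \Dc[n]$. Interior product gives an $\Oc$-linear isomorphism
$$\Omega^n_M\otimes_\Oc\wedge^k\Theta_M \;\simeq\; \Omega^{n-k}_M, \qquad \omega\otimes(\xi_1\wedge\cdots\wedge\xi_k)\mapsto \iota_{\xi_k}\cdots\iota_{\xi_1}\omega,$$
between locally free $\Oc$-modules of the same finite rank. Using the $\Oc$-flatness of $\Dc$ to rearrange tensor factors, this lifts to an isomorphism of right $\Dc$-modules
$$\Omega^n_M\otimes_\Oc\Dc\otimes_\Oc\wedge^k\Theta_M \;\simeq\; \Omega^{n-k}_M\otimes_\Oc\Dc$$
in each degree $-k$, providing a candidate termwise identification of the transported Spencer complex with $\Omega^\bullet_M\otimes_\Oc\Dc[n]$.

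The main obstacle is then to check that under these termwise isomorphisms, the transported Spencer differential coincides with the de Rham differential on $\Omega^\bullet_M\otimes_\Oc \Dc$, and that the augmentation in degree $0$ recovers the prescribed evaluation $\omega\otimes Q\mapsto \omega(Q)$. This is a local computation: in coordinates $(x_1,\dots,x_n)$, writing $\vol = dx_1\wedge\cdots\wedge dx_n$ and using $L_{\partial_i}\vol=0$, the contraction isomorphism sends $\vol\otimes P\otimes(\partial_{i_1}\wedge\cdots\wedge\partial_{i_k})$ to $\pm\, dx_J\otimes P$ with $J=\{1,\dots,n\}\setminus\{i_1,\dots,i_k\}$. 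The Spencer differential, which takes $P\otimes(\partial_{i_1}\wedge\cdots\wedge\partial_{i_k})$ to $\sum_{j}(-1)^{j+1}P\partial_{i_j}\otimes(\partial_{i_1}\wedge\cdots\widehat{\partial_{i_j}}\cdots\wedge\partial_{i_k})$, is then seen, after a careful Koszul sign check, to correspond to the standard de Rham differential $dx_J\otimes P\mapsto \sum_i dx_i\wedge dx_J\otimes \partial_i P$; in top degree the map is tautologically $\omega\otimes Q\mapsto \omega\cdot Q=\omega(Q)$ by definition of the right $\Dc$-action on $\Omega^n_M$. These local identifications glue over $M$ since all constructions involved are canonical.
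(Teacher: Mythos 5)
Your proposal is correct and follows essentially the same route as the paper: the paper's proof also tensors the Spencer resolution of $\Oc$ by $\Omega^n_M$ (via the left/right equivalence) and identifies the result with $\Omega^*_M\otimes_\Oc\Dc[n]$ through the interior product isomorphism $X\otimes\omega\mapsto i_X\omega$. Your explicit local verification that the transported Spencer differential matches the de Rham differential, and that the degree-zero augmentation is the evaluation map, simply fills in details the paper leaves implicit.
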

\begin{proof}
This follows from the fact that the above map is induced by tensoring
the Spencer complex by $\Omega^n_{M}$, and by the internal product isomorphism
$$
\begin{array}{ccc}
\Sym_{\Mod_{dg}(\Dc)}([\underset{-1}{\Dc\otimes\Theta_{M}}\to \underset{0}{\Dc\otimes\Oc}])\otimes \Omega^n_{M} & \longrightarrow & (\Omega^{*}_{M}\otimes\Dc[n],d)\\
X\otimes\omega								&\longmapsto & i_{X}\omega.
\end{array}
$$
\end{proof}

We will see that in the super setting, this proposition can be taken
as a definition of the right $\Dc$-modules of volume forms, called
Berezinians.

The $\Dc$-modules we will use are usually not $\Oc$-coherent but only
$\Dc$-coherent. The right duality to be used in the monoidal category $(\Mod(\Dc),\otimes)$ to
get a biduality statement for coherent $\Dc$-modules is thus not the internal duality
$\uHom_{\Oc}(\Mc,\Oc)$ but the derived dual $\Dc^{op}$-module
$$\Db(\Mc):=\RHomc_{\Dc}(\Mc,\Dc).$$
The non-derived dual works well for projective $\Dc$-modules, but most of the
$\Dc$-modules used in field theory are only coherent, so that one often uses
the derived duality operation. We now describe the relation (based on biduality)
between the De Rham and duality functors.
\begin{proposition}
\label{duality-de-Rham}
Let $\Sc$ be a coherent $\Dc^{op}$-module and $\Mc$ be a coherent $\Dc$-module.
There is a natural quasi-isomorphism
$$\R\Sol_{\Db(\Mc)}(\Sc):=\RHomc_{\Dc^{op}}(\Db(\Mc),\Sc)\cong\DR_{\Sc}(\Mc),$$
where $\Db(\Mc):=\RHom_{\Dc}(\Mc,\Dc)$ is the $\Dc^{op}$-module dual of $\Mc$.
\end{proposition}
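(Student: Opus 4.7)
The plan is to construct a natural map
$$\mathrm{ev} \colon \mathcal{S} \otimes_{\mathcal{D}} \mathcal{M} \longrightarrow \mathcal{H}om_{\mathcal{D}^{op}}\bigl(\mathcal{H}om_{\mathcal{D}}(\mathcal{M},\mathcal{D}),\,\mathcal{S}\bigr)$$
by sending $s \otimes m$ to the $\mathcal{D}^{op}$-linear map $\phi \mapsto s\cdot \phi(m)$, then to derive it and to show that it becomes a quasi-isomorphism under the coherence hypotheses. This is really the tensor--hom duality for finitely presented projective modules, propagated through a resolution.

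First, I would observe that the map is an isomorphism in the free case $\mathcal{M} = \mathcal{D}$: both sides equal $\mathcal{S}$, since $\mathcal{H}om_{\mathcal{D}}(\mathcal{D},\mathcal{D}) = \mathcal{D}$ and $\mathcal{H}om_{\mathcal{D}^{op}}(\mathcal{D},\mathcal{S}) = \mathcal{S} = \mathcal{S} \otimes_{\mathcal{D}} \mathcal{D}$. By additivity, the same holds for any finitely generated free left $\mathcal{D}$-module $\mathcal{D}^n$. Consequently, for any bounded complex $P^\bullet$ of such finite free modules, $\mathcal{H}om_{\mathcal{D}}(P^\bullet,\mathcal{D})$ is a bounded complex of finite free right $\mathcal{D}$-modules, and the evaluation map induces a term-wise, hence global, isomorphism
$$\mathcal{S} \otimes_{\mathcal{D}} P^\bullet \;\simto\; \mathcal{H}om_{\mathcal{D}^{op}}\bigl(\mathcal{H}om_{\mathcal{D}}(P^\bullet,\mathcal{D}),\,\mathcal{S}\bigr).$$

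Next, I would exploit the coherence assumption on $\mathcal{M}$: since $\mathcal{D}$ is a coherent sheaf of rings on the smooth variety $M$, any coherent $\mathcal{D}$-module admits, locally on $M$, a resolution $P^\bullet \to \mathcal{M}$ by finitely generated free left $\mathcal{D}$-modules. Using such a resolution (and working locally, then glueing, or equivalently phrasing the argument in the derived category of coherent complexes), I get
$$\mathbb{D}(\mathcal{M}) = \mathbb{R}\mathcal{H}om_{\mathcal{D}}(\mathcal{M},\mathcal{D}) \;\simeq\; \mathcal{H}om_{\mathcal{D}}(P^\bullet,\mathcal{D}),$$
a complex of projective right $\mathcal{D}$-modules. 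Hence $\mathbb{R}\mathcal{H}om_{\mathcal{D}^{op}}(\mathbb{D}(\mathcal{M}),\mathcal{S})$ is computed by the underived $\mathcal{H}om_{\mathcal{D}^{op}}$ applied to this complex. Simultaneously, the left derived tensor product $\mathcal{S} \overset{\mathbb{L}}{\otimes}_{\mathcal{D}} \mathcal{M} = \mathrm{DR}_{\mathcal{S}}(\mathcal{M})$ is computed as $\mathcal{S} \otimes_{\mathcal{D}} P^\bullet$. The term-wise isomorphism above then assembles into the desired quasi-isomorphism.

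The main obstacle I anticipate is ensuring that the local finite free resolutions used above yield a globally well-defined isomorphism in the derived category. Two possible approaches are: (i) check that the natural evaluation map is independent, up to homotopy, of the chosen resolution, and glue; or (ii) phrase everything in $D^b_{\mathrm{coh}}(\mathcal{D})$ where resolutions of coherent complexes by finite locally free modules exist up to quasi-isomorphism, so that the naturality of the evaluation map suffices. A second, more technical point worth checking is the finite cohomological amplitude (so that $\mathbb{D}(\mathcal{M})$ is indeed a bounded complex of coherent right $\mathcal{D}$-modules and all Ext/Tor spectral sequences converge); on a smooth $n$-dimensional $M$ this follows from the finite global dimension of $\mathcal{D}$, so the complexes stay bounded and the biduality machine works without further hypothesis.
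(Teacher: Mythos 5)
Your argument is sound, but note that the paper itself gives no proof of this proposition: it is recalled as a standard biduality statement, with the reader referred to the $\Dc$-module literature (Schneiders, Kashiwara) cited at the start of the section, so there is no internal proof to compare against. What you write is precisely the standard argument one would find there: the evaluation map $\Sc\otimes_{\Dc}\Mc\to\Homc_{\Dc^{op}}(\Homc_{\Dc}(\Mc,\Dc),\Sc)$, $s\otimes m\mapsto(\phi\mapsto s\,\phi(m))$, which is an isomorphism for finite free modules, propagated through a finite free resolution, with boundedness coming from the finite homological dimension of $\Dc$ on a smooth variety; this is correct. Two points are worth tightening. First, ``projective'' is not the operative property for computing the sheaf-level $\RHomc$ (categories of sheaves of $\Dc$-modules do not have enough projectives); what you actually use is that the terms of $\Homc_{\Dc}(P^\bullet,\Dc)$ are finite locally free, so that $\Homc_{\Dc^{op}}(-,\Sc)$ is exact on them. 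Second, your globalization worry is best resolved by a sharpened form of your option (ii): the canonical morphism $\Sc\Lotimes_{\Dc}\Mc\to\RHomc_{\Dc^{op}}(\Db(\Mc),\Sc)$ exists globally by the derived tensor--hom formalism, independently of any choice of resolution, and the statement that it is a quasi-isomorphism is local on $M$, where your finite free resolution computation applies; this avoids gluing locally chosen resolutions altogether.
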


The use of $\Dc$-duality will be problematic in the study of covariant
operations (like Lie bracket on local vector fields). We will come back
to this in Section \ref{local-operations}.

\subsection{Supervarieties and their Berezinians}
\label{Berezinian}
We refer to Penkov's article \cite{Penkov} for a complete study
of the Berezinian in the $\Dc$-module setting and to Deligne-Morgan's
lectures \cite{notes-on-supersymmetry} and Manin's book \cite{Manin-supergeometry}
for more details on super-varieties.
We also refer to \cite{Fred-cours-m2-physique} for
a treatment of smooth super-geometry making a systematic use of functors of points.

Let $M$ be a super-variety of dimension $n|m$ and denote $\Omega^1_M$ the
$\Oc_M$-module of differential forms on $M$
and $\Omega^*_M$ the super-$\Oc_M$-module of higher differential forms on $M$, defined as
the exterior (i.e., odd symmetric) power
$$\Omega^*_{M}:=\wedge^*\Omega^1_{M}:=\Sym_{\Mod(\Oc_M)}\Omega^1_{M}[1].$$
Remark that $\Omega^*_M$ is strickly speaking a $\Z/2$-bigraded $\R$-module, but we can
see it as a $\Z/2$-graded module because its diagonal $\Z/2$-grading
identifies with $\Sym_{\Mod(\Oc_M)}T\Omega^1_M$, where $T:\Mod(\Oc_M)\to \Mod(\Oc_M)$
is the grading exchange. Thus from now on, we consider $\Omega^*_M$ as a mere $\Z/2$-graded module.

The super version of Proposition \ref{universal-de-Rham}
can be taken as a definition of the Berezinian,
as a complex of $\Dc$-modules, up to quasi-isomorphism.
\begin{definition}
The \emph{Berezinian} of $M$ is defined
in the derived category of $\Dc_{M}$-modules by the formula
$$\Ber_{M}:=\Omega^*_{M}\otimes_{\Oc}\Dc[n].$$
The \emph{complex of integral forms} $I_{*,M}$ is defined by
$$I_{*,M}:=\RHom_{\Dc}(\Ber_{M},\Ber_{M}).$$
\end{definition}

The following proposition (see \cite{Penkov}, 1.6.3) gives a description of the Berezinian
as a $\Dc$-module.
\begin{proposition}
The Berezinian complex is concentraded in degree $0$, and equal there
to
$$\Ber_{M}:=\Extc^n_{\Dc}(\Oc,\Dc).$$
It is moreover projective of rank $1$ over $\Oc$.
\end{proposition}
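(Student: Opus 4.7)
The plan is to import the classical strategy used for Proposition \ref{universal-de-Rham} and its dual form into the super setting, via a super Spencer resolution of $\Oc_M$ followed by $\Dc$-linear duality. First I would construct the super analogue of the Spencer resolution: the universal derivation $\partial:\Oc_M\to\Omega^1_M$ gives, after dualization, a locally free $\Oc_M$-module $\Theta_M$ of super-rank $n|m$, and the local family of generators $(\partial_{x_1},\dots,\partial_{x_n},\partial_{\theta_1},\dots,\partial_{\theta_m})$ of the augmentation ideal $\ker(\Dc\to\Oc)$ is a super-Koszul-regular sequence. The associated super Koszul complex shows that the natural map
$$\Sym_{(\Mod_{dg}(\Dc),\otimes)}([\Dc\otimes\Theta_M\to\Dc])\longrightarrow \Oc$$
is a quasi-isomorphism of dg-$\Dc$-modules, providing a projective resolution of $\Oc$ in $\Mod(\Dc)$, exactly as in Proposition \ref{universal-de-Rham} but with the super conventions on $\Sym$.

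Next I would compute $\RHom_\Dc(\Oc,\Dc)$ termwise against this resolution. Each term $\Dc\otimes_\Oc \Sym^k\Theta_M[1]$ is $\Oc$-finite locally free, so the natural map
$$\Homc_\Dc\bigl(\Dc\otimes_\Oc V,\Dc\bigr)\cong V^\vee\otimes_\Oc\Dc$$
applies, and together with the super-duality $\Theta_M^\vee\cong \Omega^1_M$ it identifies the resulting dual complex with $\Omega^*_M\otimes_\Oc\Dc$ (placed so that the generator $\Dc=\Dc\otimes\Sym^0$ maps to $\Oc\otimes\Dc$ in cohomological degree $n$). Shifting by $[n]$ gives $\Ber_M=\Omega^*_M\otimes_\Oc\Dc[n]$ and hence a natural isomorphism $\Ber_M\simeq \R\Hom_\Dc(\Oc,\Dc)[n]$ in the derived category of right $\Dc$-modules.

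The main obstacle is then the concentration statement: one must show that $\Extc^k_\Dc(\Oc,\Dc)=0$ for $k\neq n$ and that $\Extc^n_\Dc(\Oc,\Dc)$ is $\Oc$-projective of rank $1$. This is a local question, so I would work in super-coordinates $(x_1,\dots,x_n,\theta_1,\dots,\theta_m)$ on a chart $U\subset\R^{n|m}$, where the Koszul differential becomes explicit, and check by direct computation on the super Koszul complex (in the style of Lang \cite{Lang}, XXI \S 4, extended with the sign rules of \cite{notes-on-supersymmetry}) that its cohomology vanishes outside top degree and is generated there, as an $\Oc_U$-module, by the class of $dx_1\wedge\cdots\wedge dx_n\otimes\partial_{\theta_1}\cdots\partial_{\theta_m}$. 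This element trivializes the cohomology sheaf, showing it is $\Oc$-locally free of super-rank $(1|0)$ or $(0|1)$ according to the parity of $m$, hence projective of rank $1$.

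Finally I would package these local trivializations: since they are canonical up to the super-Jacobian of coordinate changes (the classical Berezinian cocycle of \cite{Penkov}, 1.6.3), they glue to the globally defined projective rank-$1$ $\Oc$-module $\Extc^n_\Dc(\Oc,\Dc)$, which is thus identified with $\Ber_M$ as stated. The delicate point throughout is keeping track of sign conventions and parity shifts in the super $\Sym$; once those are fixed once and for all, the argument mirrors the classical case.
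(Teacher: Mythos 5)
Your proposal follows essentially the same route as the paper: it uses the (super) Spencer--Koszul resolution $\wedge^*\Theta_{M}\otimes_{\Oc}\Dc[-n]\to\Oc$ as a projective resolution, dualizes it $\Dc$-linearly to identify $\Ber_{M}=\Omega^*_{M}\otimes_{\Oc}\Dc[n]$ with $\RHomc_{\Dc}(\Oc,\Dc)[n]$, and invokes Koszul-regularity to get concentration in degree zero with $\Oc$-projective rank-one cohomology. The only difference is that you spell out the local coordinate computation and the Berezinian gluing cocycle that the paper compresses into the citation of the Koszul resolution of a regular module and of Penkov; this is a faithful expansion, not a different argument.
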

\begin{proof}
This follows from the fact that
$$\wedge^*\Theta_{M}\otimes_{\Oc}\Dc[-n]\to \Oc$$
is a projective resolution such that
$$
\Ber_{M}:=\Omega^*_{M}\otimes_{\Oc}\Dc[n]=\R\Homc_{\Dc}(\wedge^*\Theta_{M}\otimes_{\Oc}\Dc[-n],\Dc)
$$
and this De Rham complex is exact
(Koszul resolution of a regular module)
except in degree zero where it is equal to
$$\Extc^n_{\Dc}(\Oc,\Dc).$$
\end{proof}

\begin{proposition}
Suppose $M$ is a super-variety of dimension $m|n$. The Berezinian is a locally
free $\Oc$-module of rank $1$ on $M$ with generator denoted
$D(dx_{1},\dots,dx_{m},d\theta_{1},\dots,d\theta_{n})$. It $f:M\to M$ is
an isomorphism of super-varieties (change of coordinate) with local tangent
map $D_{x}f$ described by the even matrix
$$D_{x}f=\left(\begin{array}{cc}A & B\\ C & D\end{array}\right)$$
acting on the real vector space
$$T_{x}M=(T_{x}M)^0\oplus (T_{x}M)^1,$$
the action of $D_{x}f$ on $D(dx_{1},\dots,dx_{m})$ is given by the Berezin determinant
$$\Ber(D_{x}f):=\det(A-BD^{-1}C)\det(D)^{-1}.$$
\end{proposition}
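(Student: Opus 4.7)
The plan is to work in local super-coordinates and exploit the description of $\Ber_M$ obtained in the previous proposition as a rank-one projective $\Oc$-module. First I would fix local super-coordinates $(x_1,\dots,x_m,\theta_1,\dots,\theta_n)$ on $M$ and use the regular sequence of derivations $\partial_{x_1},\dots,\partial_{x_m},\partial_{\theta_1},\dots,\partial_{\theta_n}$ to write down the Koszul-style resolution of $\Oc$ by free $\Dc$-modules explicitly. Applying $\RHomc_{\Dc}(-,\Dc)$ then exhibits, in the top cohomological degree, a canonical local generator that I would denote $D(dx_1,\dots,dx_m,d\theta_1,\dots,d\theta_n)$; this simultaneously establishes the local freeness of rank one and the existence of the claimed generator.

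To establish the transformation law, I would reduce to the linear case, since the Berezinian character only depends on the tangent map at a point. Then I would apply the Gauss-style block decomposition
$$\begin{pmatrix} A & B \\ C & D\end{pmatrix} = \begin{pmatrix} 1 & BD^{-1} \\ 0 & 1\end{pmatrix} \begin{pmatrix} A-BD^{-1}C & 0 \\ 0 & D\end{pmatrix} \begin{pmatrix} 1 & 0 \\ D^{-1}C & 1\end{pmatrix}.$$
I would then check that the two unipotent factors act by the identity on the Berezinian line: their off-diagonal entries generate nilpotent super-derivations that shift the natural exterior/symmetric bidegree of the top generator in a direction in which it vanishes. For the diagonal middle factor, the even block $A-BD^{-1}C$ acts on the exterior part $dx_1\wedge\cdots\wedge dx_m$ by its usual determinant $\det(A-BD^{-1}C)$, while the odd block $D$ acts on the $d\theta_j$-part through the character $\det(D)^{-1}$.

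The main obstacle is justifying this inverse determinant factor. Since the shifted $d\theta_j$ sit in the symmetric (polynomial) part of $\Omega^*_M$, there is no naive top power to detect; the $\det(D)^{-1}$ emerges only through the $\Dc$-dualization step of the previous proposition, where the symmetric algebra on the shifted odd tangent directions dualizes, via $\RHomc_{\Dc}(-,\Dc)$, to a one-dimensional piece on which a linear endomorphism $D$ acts with contragredient character. I would verify this by computing the odd-block Koszul-type complex explicitly: the top cohomology generator is canonically dual to the monomial $\partial_{\theta_1}\cdots\partial_{\theta_n}$, and the contragredient action of $D$ on this dual line has character $\det(D)^{-1}$. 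Multiplicativity of the Berezin character along the Gauss decomposition then yields the formula $\Ber(D_x f) = \det(A-BD^{-1}C)\det(D)^{-1}$.
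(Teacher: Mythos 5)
Your argument is essentially correct, but note that the paper does not prove this proposition at all: it simply records it as a classical result and refers to Deligne--Morgan and Manin. What you have written is, in effect, a reconstruction of the proof contained in those references, adapted to the $\Dc$-module definition $\Ber_M=\Extc^n_{\Dc}(\Oc,\Dc)$ used in the preceding proposition: local Koszul/Spencer resolution to exhibit the rank-one generator $D(dx_1,\dots,dx_m,d\theta_1,\dots,d\theta_n)$, reduction to the linear (Jacobian) action, Gauss decomposition
$$\begin{pmatrix} A & B \\ C & D\end{pmatrix}=\begin{pmatrix} 1 & BD^{-1} \\ 0 & 1\end{pmatrix}\begin{pmatrix} A-BD^{-1}C & 0 \\ 0 & D\end{pmatrix}\begin{pmatrix} 1 & 0 \\ D^{-1}C & 1\end{pmatrix},$$
which is valid since invertibility of the even supermatrix forces $D$ to be invertible, followed by multiplicativity of the induced action on the rank-one module. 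This buys the reader an actual argument where the paper only delegates, and it has the merit of tying the transformation law directly to the $\Extc$-theoretic definition rather than to an ad hoc definition of $\Ber$ as a $\GL(m|n)$-character. Two points deserve tightening. First, the triviality of the two unipotent factors is asserted via a bidegree-shift heuristic; it is cleaner either to do the short explicit computation on the generator or to observe that the action on a rank-one module is a character, and any algebraic character of a unipotent group (even with odd parameters, working over a base with nilpotents) is trivial. Second, the identification of the odd contribution as $\det(D)^{-1}$ is the real content of the statement, and your contragredient argument is the right mechanism: the odd-variable part of the dualized Koszul complex produces a line canonically spanned by (the class dual to) $\partial_{\theta_1}\cdots\partial_{\theta_n}$, on which a linear substitution $\theta\mapsto D\theta$ acts through $\det(D)^{-1}$; writing out this one-line computation explicitly (rather than invoking it) would make the proof complete and self-contained, which is precisely what the cited classical sources do.
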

\begin{proof}
This is a classical result (see \cite{notes-on-supersymmetry} or \cite{Manin-supergeometry}).
\end{proof}

In the super-setting, the equivalence of left and right $\Dc$-modules
is given by the functor
$$\Mc\mapsto \Mc\otimes_{\Oc}\Ber_{M}$$
that twists by the Berezinian right $\Dc$-module,
which can be computed by using the definition
$$\Ber_{M}:=\Omega^*_{M}\otimes_{\Oc}\Dc[n]$$
and passing to degree $0$ cohomology.

A more explicit description of the complex of integral forms (up to
quasi-isomorphism) is given by
$$
I_{*,M}:=\RHomc_{\Dc}(\Ber_{M},\Ber_{M})
\cong
\Homc_{\Dc}(\Omega^*_{M}\otimes_{\Oc}\Dc[n],\Ber_{M})
$$
so that we get
$$
I_{*,M} \cong
\Homc_{\Oc}(\Omega^*_{M}[n],\Ber_{M})
\cong \Homc_{\Oc}(\Omega^*_{M}[n],\Oc)\otimes_{\Oc}\Ber_{M}
$$
and in particular $I_{n,M}\cong \Ber_{M}$.

Remark that Proposition \ref{universal-de-Rham} shows that
if $M$ is a non-super variety, then $\Ber_{M}$ is quasi-isomorphic
with $\Omega^n_{M}$, and this implies that
$$
I_{*,M}\cong \Homc_{\Oc}(\Omega^*_{M}[n],\Oc)\otimes_{\Oc}\Ber_{M}
\cong \wedge^*\Theta_{M}\otimes _{\Oc} \Omega^n_{M}[-n]\overset{i}{\longrightarrow}
\Omega^*_{M},
$$
where $i$ is the internal product homomorphism. This implies the isomorphism
$$
I_{*,M}\cong \Omega^*_{M},
$$
so that in the purely even case, integral forms essentially identify with
ordinary differential forms.

The main use of the module of Berezinians is given by its usefulness in the definition
of integration on super-varieties.
We refer to Manin \cite{Manin-supergeometry}, Chapter 4 for the following proposition.
\begin{proposition}
Let $M$ be a super-variety, with underlying variety $|M|$ and orientation sheaf $\orient_{|M|}$.
There is a natural integration map
$$\int_{M}[dt^{1}\dots dt^nd\theta^1\dots d\theta^q]:\Gamma_{c}(M,\Ber_{M}\otimes \orient_{|M|})\to \R$$
given in a local chart (i.e., an open subset $U\subset \R^{n|m}$) for $g=\sum_{I}g_{I}\theta^I\in \Oc$ by
$$\int_{U} [dt^{1}\dots dt^nd\theta^1\dots d\theta^q]g:=\int_{|U|} g_{1,\dots,1}(t)d^nt.$$
\end{proposition}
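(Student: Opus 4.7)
The plan is to define the integration map locally by the given formula and then glue the local definitions into a well-defined global map by verifying invariance under change of coordinates. First, I would take a section $s$ of $\Ber_{M}\otimes\orient_{|M|}$ with compact support and choose a locally finite cover of $M$ by coordinate charts $U_{\alpha}\subset\R^{n|m}$, together with a partition of unity $(\chi_{\alpha})$ subordinate to this cover. On each chart $U_{\alpha}$ one can write $\chi_{\alpha}s=g_{\alpha}\cdot D(dt^{1}\dots dt^{n}d\theta^{1}\dots d\theta^{q})\otimes\epsilon_{\alpha}$ for a unique $g_{\alpha}\in\Oc_{M}(U_{\alpha})$ of compact support and a local orientation $\epsilon_{\alpha}$ of $|U_{\alpha}|$, and one sets $\int_{U_{\alpha}}\chi_{\alpha}s$ by the explicit formula of the statement. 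The global integral is then defined by $\int_{M}s:=\sum_{\alpha}\int_{U_{\alpha}}\chi_{\alpha}s$.

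The essential point is to check that the resulting expression is independent of the choice of chart, partition of unity and ordering of odd coordinates. For the chart independence, I would consider two systems of local coordinates related by an isomorphism $f:U\to U'$ of super-varieties, and use the transformation rule proved in the preceding proposition: the generator $D(dx)$ transforms by multiplication by $\Ber(D_{x}f)=\det(A-BD^{-1}C)\det(D)^{-1}$. The extraction of the top $\theta$-coefficient on the target, combined with the odd part of the change of coordinates, produces exactly the factor $\det(D)$, while the even change of variables in the ordinary integral $\int_{|U|}$ contributes $|\det(A-BD^{-1}C)|$ by the classical Jacobian formula. The twist by $\orient_{|M|}$ absorbs the sign of this latter Jacobian, so that the two contributions multiply to give precisely the Berezin determinant, and the identity
\[
\int_{U}[dt\,d\theta]\,g=\int_{U'}[dt'\,d\theta']\,f_{*}g
\]
follows.

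The independence of the partition of unity is then formal, because compactly supported sums commute with the integral by linearity of the local formulas. The hard part is really the coordinate-change verification: one has to expand $g\in\Oc(U')$ as a polynomial in the new odd variables $\theta'=\phi(t,\theta)$, re-expand in the old $\theta$ variables, and check that the coefficient of $\theta^{1}\cdots\theta^{q}$ behaves exactly as $\det(D)$ times the pullback of $g_{1,\dots,1}$ modulo nilpotent contributions which integrate to zero after extraction of the top odd degree. Once this identity is established, the patching and the well-definedness on $\Gamma_{c}(M,\Ber_{M}\otimes\orient_{|M|})$ follow by the same argument as in the purely even case, and naturality in $M$ is a direct consequence of the naturality of the local formula together with the transformation law for $\Ber$.
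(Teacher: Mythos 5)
The paper itself gives no argument here: it simply defers to Manin's book (Chapter 4) for this proposition, so there is no internal proof to compare against. Your sketch reconstructs what is essentially the standard argument from that reference --- local definition, partition of unity, and invariance under coordinate changes via the Berezin determinant of the preceding proposition, with the twist by $\orient_{|M|}$ absorbing the sign of the even Jacobian --- and that is the right strategy.

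One point in your coordinate-change step is stated too loosely, and it is exactly where the real content lies. When the new even coordinates $t'$ acquire nilpotent corrections in the odd variables (the mixed blocks $B$, $C$), Taylor-expanding $g$ in these corrections produces derivative terms: in the model case $t'=t+\theta^{1}\theta^{2}$, $\theta'^{i}=\theta^{i}$ (whose Berezinian is $1$), the top $\theta$-coefficient picks up an extra summand $\partial_{t}g_{0}$, which is not killed by ``extraction of the top odd degree'' as you assert, but only because it is a total derivative of a compactly supported function, hence integrates to zero by Stokes. So the compact-support hypothesis enters the invariance proof itself, not just the convergence of the integral. The usual way to organize the verification (as in Manin or Deligne--Morgan) is to factor an arbitrary coordinate change into purely even, purely odd, and elementary mixed (shear-type) changes and check each case separately, the mixed case being handled by the divergence argument above; with that correction your gluing and naturality steps go through as in the purely even case.
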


We finish by describing the inverse and direct image functors in the supergeometric
setting, following the presentation of Penkov in \cite{Penkov}.

Let $g:X\to Y$ be a morphism of supermanifolds. Recall that for $\Fc$ a sheaf
of $\Oc_Y$-modules on $Y$, we denote $g^{-1}\Fc$ the sheaf on $X$
defined by
$$g^{-1}\Fc(U):=\underset{g(U)\subset V}{\lim}\Fc(V).$$
The $(\Dc_{X},g^{-1}\Dc_{Y})$ module of relative
inverse differential operators is defined as
$$\Dc_{X\to Y}:=\Oc_{X}\otimes_{g^{-1}\Oc_{Y}}g^{-1}\Dc_{Y}.$$
The $(g^{-1}\Dc_{Y},\Dc_{X})$ module of relative
direct differential operators is defined as
$$
\Dc_{X\gets Y}:=
\Ber_{X}\otimes_{\Oc_{X}}\Dc_{X\to Y}\otimes_{g^{-1}\Oc_{Y}} (\Ber_{Y}^*).
$$

The inverse image functor of $\Dc$-modules is defined by
$$
g^*_{\Dc}(\_):=\Dc_{X\to Y}\Lotimes_{g^{-1}\Oc_{Y}}g^{-1}(\_):
D(\Dc_{Y})\to D(\Dc_{X}).
$$

If $g:X\hookrightarrow Y$ is a locally closed embedding,
the direct image functor is defined by
$$
g^\Dc_{*}(\_):=\Dc_{Y\gets X}\Lotimes_{\Dc_{X}}(\_):D(\Dc_{X})\to D(\Dc_{Y}).
$$
More generally, for any morphism $g:X\to Y$, one defines
$$g^\Dc_{*}(\_):=\R f_{*}(\_\Lotimes_{\Dc}\Dc_{X\to Y}).$$

\subsection{Differential algebras and non-linear partial differential equations}
\label{non-linear-PDEs}
In this section, we will use systematically the language of differential calculus
in symmetric monoidal categories,
and the functor of points approach to spaces of fields, described in \cite{Fred-histories-and-observables}.
We restrict our presentation to
polynomial partial differential equations with functional coefficients,
but our results also apply to smooth partial differential equations (for a full treatment, that would
be too long for this article, see \cite{Fred-cours-m2-physique}).
We specialize the situation to the symmetric monoidal category
$$(\Mod(\Dc_{M}),\otimes_{\Oc_{M}})$$
of left $\Dc_{M}$-module on a given (super-)variety $M$.
Recall that there is an equivalence
$$(\Mod(\Dc_{M}),\otimes)\to (\Mod(\Dc_{M}^{op}),\otimes^!)$$
given by tensoring with the $(\Dc,\Dc^{op})$-modules $\Ber_{M}$
and $\Ber_{M}^{-1}=\Homc_{\Oc}(\Ber_{M},\Oc)$. The unit objects
for the two monoidal structures are $\Oc$ and $\Ber_{M}$ respectively.
If $\Mc$ is a $\Dc$-module (resp. a $\Dc^{op}$-module), we denote
$\Mc^r:=\Mc\otimes\Ber_{M}$ (resp. $\Mc^\ell:=\Mc\otimes\Ber_{M}^{-1}$)
the corresponding $\Dc^{op}$-module (resp. $\Dc$-module).

Recall that if $P\in \Z[X]$ is a polynomial, one can study
the solution space
$$\uSol_{P=0}(A)=\{x\in A,\;P(x)=0\}$$
of $P$ with values in any commutative unital ring.
Indeed, in any such ring, one has a sum, a multiplication,
a zero and a unit that fulfill the necessary compatibilities
to be able to write down the polynomial. One can thus think
of the mathematical object given by the category of commutative
unital rings as solving the mathematical problem of giving a natural setting
for a coordinate free study of polynomial equations.
This solution space is representable, meaning that
there is a functorial isomorphism
$$\uSol_{P=0}(-)\cong \Hom_{\Rings_{cu}}(\Z[X]/(P),-).$$
This shows that the solution space of an equation essentially determine
the equation itself. Remark that the polynomial $P$
lives in the free algebra $\Z[X]$ on the given variable
that was used to write it.

Suppose now given the bundle $\pi_1:C=\R\times \R\to \R=M$ of smooth varieties.
We would like to study an algebraic non-linear partial differential equation
$$F(t,\partial_{t}^i x)=0$$
that applies to sections $x\in \Gamma(M,C)$, that are functions $x:\R\to \R$.
It is given by a polynomial $F(t,x_{i})\in \R[t,\{x_{i}\}_{i\geq 0}]$.
The solution space of such an equation can be studied with
values in any $\Oc$-algebra $\Ac$ equipped with an action of
the differentiation $\partial_{t}$ (that fulfills a Leibniz rule for multiplication),
the basic example being given
by the algebra $\Jet(\Oc_{C}):=\R[t,\{x_{i}\}_{i\geq 0}]$ above with the action
$\partial_{t}x_{i}=x_{i+1}$.
The solution space of the given partial differential equation is then given by the functor
$$\uSol_{\Dc,F=0}(\Ac):=\{x\in \Ac,\;F(t,\partial_{t}^i x)=0\}$$
defined on all $\Oc_{C}$-algebras equipped with an action of $\partial_{t}$.
To be more precise, we define the category of $\Dc$-algebras,
that solves the mathematical problem of finding a natural setting for
a coordinate free study of polynomial non-linear partial differential equations
with smooth super-function coefficients.

\begin{definition}
Let $M$ be a variety. A \emph{$\Dc_{M}$-algebra} is an algebra
$\Ac$ in the monoidal category of $\Dc_{M}$-modules.
More precisely, it is an $\Oc_{M}$-algebra equipped with
an action
$$\Theta_{M}\otimes  \Ac\to \Ac$$
of vector fields on $M$ such that the product in $\Ac$ fulfills Leibniz's rule
$$\partial(fg)=\partial(f)g+f\partial(g).$$
\end{definition}

Recall from \cite{Fred-cours-m2-physique} 
that one can extend the jet functor
to the category of smooth $\Dc$-algebras (and even to smooth super-algebras), to extend
the forthcoming results to the study of non-polynomial smooth partial differential equations.
The forgetful functor
$$\Forget:\Alg_{\Dc}\to \Alg_{\Oc}$$
has an adjoint (free $\Dc$-algebra on a given $\Oc$-algebra)
$$\Jet:\Alg_{\Oc}\to \Alg_{\Dc}$$
called the (infinite) jet functor. It fulfills the universal property that for every $\Dc$-algebra
$\Bc$, the natural map
$$\Hom_{\Alg_{\Oc}}(\Oc_{C},\Bc)\cong \Hom_{\Alg_{\Dc}}(\Jet(\Oc_{C}),\Bc)$$
induced by the natural map $\Oc_{C}\to\Jet(\Oc_{C})$ is a bijection.

Using the jet functor, one can show that the solution space
of the non-linear partial differential equation
$$F(t,\partial_{t}^i x)=0$$
of the above example is representable,
meaning that there is a natural isomorphism of functors on $\Dc$-algebras
$$\uSol_{\Dc,F=0}(-)\cong \Hom_{\Alg_{\Dc}}(\Jet(\Oc_{C})/(F),-)$$
where $(F)$ denotes the $\Dc$-ideal generated by $F$.
This shows that the jet functor plays the role of the polynomial
algebra in the differential algebraic setting.
If $\pi:C\to M$ is a bundle, we define
$$\Jet(C):=\uSpec(\Jet(\Oc_{C})).$$
One can summarize the above discussion by the following array:
\vskip0.5cm
\begin{tabular}{|c|c|c|}\hline
Equation	& Polynomial & Partial differential\\\hline
Formula 	& $P(x)=0$	 & $F(t,\partial^\alpha x)=0$\\
Naive variable & $x\in \R$ & $x\in \Hom(\R,\R)$\\
Algebraic structure & commutative unitary ring $A$   & $\Dc_{M}$-algebra $A$\\
Free structure & $P\in \R[x]$	& $F\in \Jet(\Oc_{C})$\\
Solution space & $\{x\in A,P(x)=0\}$ & $\{x\in A,F(t,\partial^\alpha x)=0\}$\\\hline
\end{tabular}

\begin{example}
If $\pi:C=\R^{n+m}\to \R^n=M$ is a trivial bundle of dimension $m+n$ over $M$ of dimension $n$,
with algebra of coordinates $\Oc_{C}:=\R[\underline{t},\underline{x}]$
for $\underline{t}=\{t^{i}\}_{i=1,\dots,n}$ and $\underline{x}=\{x^{j}\}_{j=1,\dots,m}$
given in multi-index notation, its jet algebra is
$$\Jet(\Oc_{C}):=\R[\underline{t},\underline{x}_{\alpha}]$$
where $\alpha\in \N^m$ is a multi-index representing the derivation index.
The $\Dc$-module structure is given by making $\frac{\partial}{\partial t^i}$ act
through the total derivative
$$D_{i}:=\frac{\partial}{\partial t^i}+\sum_{\alpha,k} x_{i\alpha}^k\frac{\partial}{\partial x_{\alpha}^k}$$
where $i\alpha$ denotes the multi-index $\alpha$ increased by one at the $i$-th coordinate.
For example, if $\pi:C=\R\times \R\to \R=M$, one gets
$$D_{1}=\frac{\partial}{\partial t}+x_{1}\frac{\partial}{\partial x}+x_{2}\frac{\partial}{\partial x_{1}}+\dots.$$
\end{example}

\begin{definition}
Let $\pi:C\to M$ be a bundle. A \emph{partial differential equation} on the space
$\Gamma(M,C)$ of sections of $\pi$ is given by a quotient
$\Dc_{M}$-algebra
$$p:\Jet(\Oc_{C})\twoheadrightarrow \Ac$$
of the jet algebra of the $\Oc_{M}$-algebra $\Oc_{C}$. Its \emph{local space of solutions}
is the $\Dc$-space whose points with values in $\Jet(\Oc_C)$-$\Dc$-algebras $\Bc$
are given by
$$
\uSol_{\Dc,(\Ac,p)}:=\{x\in \Bc|\;f(x)=0,\;\forall f\in \Ker(p)\}
$$
The \emph{non-local space of solutions} of the partial differential equation $(\Ac,p)$ is
the subspace of $\uGamma(M,C)$ given by
$$\uSol_{(\Ac,p)}:=\{x\in \uGamma(M,C)|\;(j_{\infty}x)^*L=0\textrm{ for all }L\in \Ker(p)\}$$
where $(j_{\infty}x)^*:\Jet(\Oc_{C})\to \Oc_{M}$ is (dual to) the Jet of $x$. Equivalently,
$x\in \uSol_{(\Ac,p)}$ if and only if there is a natural factorization
$$
\xymatrix@M+6pt{
\Jet(\Oc_{C})\ar[r]^(0.6){(j_{\infty}x)^*}\ar[dr]_{p} & \Oc_{M} \\
							     & \Ac\ar[u]}
$$
of the jet of $x$ through $p$.
\end{definition}

\subsection{Local functionals and local differential forms}
The natural functional invariant associated to a given $\Dc$-algebra
$\Ac$ is given by the De Rham complex
$$\DR(\Ac):=(I_{*,M}\otimes_{\Oc}\Dc[n])\Lotimes_{\Dc}\Ac$$
of its underlying $\Dc$-module with coefficient in the universal
complex of integral forms $I_{*,M}\otimes_{\Oc}\Dc[n]$,
and its cohomology $h^*(\DR(\Ac))$. We will denote
$$h(\Ac):=h^0(\DR(\Ac))=\Ber_{M}\otimes_{\Dc}\Ac$$
where $\Ber_{M}$ here denotes the Berezinian object (and not only the complex
concentrated in degree $0$). If $M$ is a non-super variety, one gets
$$
\DR(\Ac)=\Omega^n_{M}\Lotimes_{\Dc}\Ac\hspace{1cm}\textrm{and}\hspace{1cm}
h(\Ac)=\Omega^n_{M}\otimes_{\Dc}\Ac.
$$
The De Rham cohomology is given by the cohomology of the complex
$$\DR(\Ac)=I_{*,M}[n]\otimes_{\Oc_{M}}\Ac,$$
which gives
$$\DR(\Ac)=\wedge^*\Omega^1_{M}[n]\otimes_{\Oc_{M}}\Ac$$
in the non super case.

If $\Ac$ is a jet algebra of section of a bundle $\pi:C\to M$ with basis
a classical manifold, the De Rham complex identifies with a sub-complex of the usual De Rham
complex of $\wedge^*\Omega^1_{\Ac/\R}$ of $\Ac$ viewed as an ordinary ring.
One can think of classes in $h^*(\DR(\Ac))$ as defining a special class of (partially defined)
functionals on the space $\uGamma(M,C)$, by integration along singular homology cycles
with compact support.

\begin{definition}
Let $M$ be a super-variety of dimension $p|q$.
For every smooth simplex $\Delta_{n}$, we denote
$\Delta_{n|q}$ the super-simplex obtained by adjoining
$q$ odd variables to $\Delta_{n}$.
The \emph{singular homology of $M$ with compact support} is
defined as the homology $H_{*,c}(M)$ of the simplicial set
$$\Hom(\Delta_{\bullet|q},M)$$
of super-simplices with compact support condition on the body
and non-degeneracy condition on odd variables.
\end{definition}

Recall that, following \cite{Fred-cours-m2-physique},
a functional $f\in \Hom(\uGamma(M,C),\A^1)$ on a space of fields denotes
in general, by definition, only a partially defined function (with a well-chosen domain of
definition).
\begin{proposition}
Let $\pi:C\to M$ be a bundle and $\Ac$ be the $\Dc_{M}$-algebra
$\Jet(\Oc_{C})$.
There is a natural integration pairing
$$
\begin{array}{ccc}
H_{*,c}(M)  \times  h^{*-n}(\DR(\Ac)) & \to & \Hom(\uGamma(M,C),\A^1)\\
(\Sigma,\omega)					&\mapsto& [x\mapsto \int_{\Sigma}(j_{\infty}x)^*\omega]
\end{array}
$$
where $j_{\infty}x:M\to \Jet(C)$ is the taylor series of a given section $x$.
If $p:\Jet(\Oc_{C})\to \Ac$ is a given partial differential equation (such that $\Ac$ is $\Dc$-smooth) on
$\Gamma(M,C)$ one also gets
an integration pairing
$$
\begin{array}{ccc}
H_{*,c}(M)  \times  h^{*-n}(\DR(\Ac)) & \to & \Hom(\uSol_{(\Ac,p)},\A^1)\\
(\Sigma,\omega) &\mapsto& S_{\Sigma,\omega}:[x(t,u)\mapsto \int_{\Sigma}(j_{\infty}x)^*\omega].
\end{array}
$$
\end{proposition}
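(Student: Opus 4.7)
The plan is to define the pairing on representatives and then show that it descends to homology on $\Sigma$ and De Rham cohomology on $\omega$. First I would invoke the universal property of the jet algebra stated in Section~\ref{non-linear-PDEs}: a section $x\in\uGamma(M,C)$ corresponds dually to an $\Oc_M$-algebra morphism $x^*:\Oc_C\to\Oc_M$, and since $\Oc_M$ is tautologically a $\Dc_M$-algebra, the adjunction produces a canonical $\Dc_M$-algebra morphism $(j_\infty x)^*:\Jet(\Oc_C)\to\Oc_M$. Applying $I_{*,M}[n]\otimes_{\Oc_M}(-)$ (or equivalently, the functor $\DR$) to this morphism yields a map of complexes
$$\DR(\Jet(\Oc_C))\;\longrightarrow\;\DR(\Oc_M)=I_{*,M}[n]$$
that pulls $\omega$ back to an integral form on $M$. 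Because $(j_\infty x)^*$ is $\Dc$-linear, the construction is compatible with the differentials, hence sends closed (respectively exact) representatives to closed (respectively exact) integral forms on $M$.

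Next I would invoke the integration map on $\Gamma_c(M,\Ber_M\otimes\orient_{|M|})$ from the proposition in Section~\ref{Berezinian}, together with the identification $I_{n,M}\cong\Ber_M$ in top degree. For a smooth super-simplex $\sigma:\Delta_{k|q}\to M$ and a degree-$k$ integral form $\eta$ on $M$, the pullback $\sigma^*\eta$ is a compactly supported section of $\Ber_{\Delta_{k|q}}\otimes\orient$ (after restriction to the body of the simplex); integrating and extending linearly to singular chains gives $\int_\Sigma(j_\infty x)^*\omega$. The compact support hypothesis on $\Sigma$ is precisely what ensures convergence, and the domain of definition of the partial functional $x\mapsto\int_\Sigma(j_\infty x)^*\omega$ is exactly the subspace of sections whose $\infty$-jet composed with $\omega$ has integrable pullback along $\Sigma$.

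Well-definedness on homology and cohomology classes then follows from Stokes' theorem in the super-setting (cf. Manin, \cite{Manin-supergeometry}, Chapter~4): whenever $\Sigma=\partial\Sigma'$ or $\omega=d\eta$,
$$\int_\Sigma (j_\infty x)^*d\eta \;=\; \int_\Sigma d\,(j_\infty x)^*\eta \;=\; \int_{\partial\Sigma}(j_\infty x)^*\eta,$$
so one of the three expressions vanishes. Naturality in both variables is then evident from the construction. For the second pairing, if $x\in\uSol_{(\Ac,p)}$, the very definition of the solution space provides a $\Dc$-algebra factorization $(j_\infty x)^*=\bar x\circ p$ with $\bar x:\Ac\to\Oc_M$, so one can start directly from $\omega\in\DR(\Ac)$ and apply $\DR(\bar x)$ in place of $\DR((j_\infty x)^*)$; the previous paragraph's argument carries over verbatim.

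The main obstacle is not the algebraic bookkeeping but the analytic content of the second step: precisely defining the pullback of a section of $\Ber_M$ along a smooth super-simplex, verifying that the simplicial boundary operator is adjoint to the $\Dc$-induced De Rham differential on integral forms, and cleanly delimiting the convergence domain of the functional. These points are the standard super-generalizations of classical singular/De Rham theory, but the handling of the odd coordinates on simplicial boundaries is where the actual work sits; once these technicalities are in place, everything else is formal.
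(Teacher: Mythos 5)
Your construction and verification follow essentially the paper's own argument: the pairing is defined through the jet adjunction (and, for the second pairing, the factorization of $(j_{\infty}x)^*$ through $p$ coming from the definition of $\uSol_{(\Ac,p)}$), and well-definedness on homology and De Rham classes is exactly the paper's Stokes-type step (the integral of a total divergence over a closed cycle vanishes, the super case reducing to the classical one). The only nuance the paper adds is that the domain of definition of the partially defined functional is cut out by a Lebesgue domination condition ensuring that $t\mapsto\int_{\Sigma}(j_{\infty}x_{t})^*\omega$ is smooth for a parametrized family $x_{t}$ (so the value genuinely lands in $\A^1(U)$ on $U$-points), which is slightly stronger than the section-by-section integrability you impose.
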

\begin{proof}
Remark that the values of the above pairing are given by partially defined functions,
with a domain of definition given by Lebesgue's domination condition to make
$t\mapsto \int_{\Sigma}(j_{\infty}x_{t})^*\omega$ a smooth function of $t$
if $x_{t}$ is a parametrized trajectory. The only point to check is that
the integral is independent of the chosen cohomology class. This follows from
the fact that the integral of a total divergence on a closed subspace is zero,
by Stokes' formula (the super case follows from the classical one).
\end{proof}

\begin{definition}
A functional $S_{\Sigma,\omega}:\uGamma(M,C)\to \A^1$
or $S_{\Sigma,\omega}:\uSol_{(\Ac,p)}\to \A^1$
obtained by the above constructed pairing is called a \emph{quasi-local functional}.
We denote
$$\Oc^{qloc}\subset \Oc:=\uHom(\uGamma(M,C),\A^1)$$
the space of quasi-local functionals.
\end{definition}

Remark that for $k\leq n$, the classes in $h^{*-k}(\DR(\Ac))$ are usually called
(higher) conservation laws for the partial differential equation $p:\Jet(\Oc_{C})\to \Ac$.

If $\Ac$ is a $\Dc$-algebra, i.e., an algebra in $(\Mod(\Dc),\otimes_{\Oc})$,
one defines, as usual, an $\Ac$-module of differential forms as an $\Ac$-module
$\Omega^1_{\Ac}$ in the monoidal category $(\Mod(\Dc),\otimes)$, equipped
with a ($\Dc$-linear) derivation $d:\Ac\to \Omega^1_{\Ac}$ such for every 
$\Ac$-module $\Mc$ in $(\Mod(\Dc),\otimes)$, the natural
map
$$
\Hom_{\Mod(\Ac)}(\Omega^1_\Ac,\Mc)\to \Der_{\Mod(\Ac)}(\Ac,\Mc)
$$
given by $f\mapsto f\circ d$ is a bijection.

Remark that the natural $\Oc$-linear map
$$\Omega^1_{\Ac/\Oc}\to \Omega^1_{\Ac}$$
is an isomorphism of $\Oc$-modules.
The $\Dc$-module structure on $\Omega^1_{\Ac}$ can be seen as an Ehresman
connection, i.e., a section of the natural projection
$$\Omega^1_{\Ac/\R}\to \Omega^1_{\Ac/M}.$$

\begin{example}
In the case of the jet space algebra $\Ac=\Jet(\Oc_{C})$ for $C=\R^{n+m}\to \R^n=M$,
a basis of $\Omega^1_{\Ac/M}$ compatible
with this section is given by the Cartan forms
$$\theta_{\alpha}^i=dx_{\alpha}^i-\sum_{j=1}^nx_{j\alpha}^idt^j.$$
The De Rham differential $d:\Ac\to \Omega^1_{\Ac}$ in the $\Dc$-algebra
setting and its De Rham cohomology (often denoted $d^V$ in the literature), can then be computed
by expressing the usual De Rham differential $d:\Ac\to \Omega^1_{\Ac/M}$
in the basis of Cartan forms.
\end{example}

As explained in Section \ref{intro-D-modules}, the right notion of finiteness and duality in
the monoidal category of $\Dc$-modules is not the $\Oc$-finite presentation and duality
but the $\Dc$-finite presentation and duality. This extends to the category of $\Ac$-modules
in $(\Mod(\Dc),\otimes)$. The following notion of smoothness differs from the usual one
(in general symmetric monoidal categories)
because we impose the $\Ac[\Dc]$-finite presentation, where $\Ac[\Dc]:=\Ac\otimes_{\Oc}\Dc$,
to have good duality properties.
\begin{definition}
The $\Dc$-algebra $\Ac$ is called \emph{$\Dc$-smooth} if $\Omega^1_{\Ac}$ is a projective
$\Ac$-module of finite $\Ac[\Dc]$-presentation in the category of $\Dc$-modules, and
$\Ac$ is a (geometrically) finitely generated $\Dc$-algebra, meaning that there exists
an $\Oc$-module $\Mc$ of finite type, and an ideal $\Ic\subset \Sym_{\Oc}(\Mc)$
and a surjection
$$\Jet(\Sym_{\Oc}(\Mc)/\Ic)\twoheadrightarrow \Ac.$$
\end{definition}

\begin{proposition}
If $\Ac=\Jet(\Oc_{C})$ for $\pi:C\to M$ a smooth map of varieties, then
$\Ac$ is $\Dc$-smooth and the $\Ac[\Dc]$-module $\Omega^1_{\Ac}$ is isomorphic to
$$\Omega^1_{\Ac}\cong \Omega^1_{C/M}\otimes_{\Oc_{C}}\Ac[\Dc].$$
\end{proposition}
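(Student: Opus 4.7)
The plan is to establish the isomorphism via the universal property of $\Omega^1_\Ac$ combined with the $(\Jet, \Forget)$ adjunction, and then deduce $\Dc$-smoothness from smoothness of $\pi$. Recall that by definition $\Omega^1_\Ac$ represents $\Dc$-linear $\Oc_M$-derivations: for any $\Ac$-module $\Mc$ in $(\Mod(\Dc), \otimes_\Oc)$, one has $\Hom_{\Mod(\Ac)}(\Omega^1_\Ac, \Mc) \cong \Der_{\Mod(\Ac)}(\Ac, \Mc)$, where the right-hand side consists of $\Dc$-linear derivations satisfying Leibniz. I will identify this with classical $\Oc_M$-derivations $\Oc_C \to \Mc$ and then repackage via the tensor-hom adjunction.

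The key step is to equip the direct sum $\Ac \oplus \Mc$ with its natural square-zero $\Dc$-algebra structure (product $(a,m)(a',m') := (aa', am' + a'm)$ and diagonal $\Dc$-action, compatible with Leibniz precisely because $\Mc$ is an $\Ac$-module in $\Mod(\Dc)$). A $\Dc$-derivation $d: \Ac \to \Mc$ then amounts to a section $\Ac \to \Ac \oplus \Mc$ of the canonical projection in the category of $\Dc$-algebras. By the $(\Jet, \Forget)$ adjunction recalled in the excerpt, such a section is the same as an $\Oc_M$-algebra lift $\Oc_C \to \Ac \oplus \Mc$ of the structural map $\Oc_C \to \Ac$, equivalently an $\Oc_M$-derivation $\Oc_C \to \Mc$ (viewed as an $\Oc_C$-module via $\Oc_C \hookrightarrow \Ac$). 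The classical universal property of Kähler differentials then identifies this with $\Hom_{\Oc_C}(\Omega^1_{C/M}, \Mc)$, and extension of scalars along $\Oc_C \to \Ac[\Dc]$ gives $\Hom_{\Ac[\Dc]}(\Omega^1_{C/M}\otimes_{\Oc_C}\Ac[\Dc], \Mc)$. Chaining these isomorphisms naturally in $\Mc$ and applying Yoneda in the category of $\Ac$-modules in $\Mod(\Dc)$ yields the desired isomorphism.

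For $\Dc$-smoothness: smoothness of $\pi$ implies $\Omega^1_{C/M}$ is locally free of finite rank over $\Oc_C$, so after extending scalars to $\Ac[\Dc]$ the module $\Omega^1_\Ac$ is locally free of finite $\Ac[\Dc]$-rank, hence projective of finite $\Ac[\Dc]$-presentation. Geometric finite generation of $\Ac$ is obtained locally: where $C \cong M \times \R^m$ one has $\Oc_C \cong \Sym_{\Oc_M}(\Oc_M^m)$, whence $\Ac \cong \Jet(\Sym_{\Oc_M}(\Oc_M^m))$ fulfills the definition with ambient module of finite type $\Oc_M^m$ and ideal zero. The hard part is the careful bookkeeping of $\Dc$-module structures in the second step: one must verify that the $\Dc$-algebra structure on $\Ac \oplus \Mc$ is well-defined (the Leibniz compatibility between the $\Dc$-action and the $\Ac$-module action on $\Mc$ is the crucial point), and that the adjunction really transports $\Dc$-derivations to classical $\Oc_M$-derivations with no loss of information. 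This ultimately rests on the fact that a $\Dc$-derivation on the free $\Dc$-algebra $\Jet(\Oc_C)$ is completely determined by its restriction to the generators $\Oc_C$, which is a direct reformulation of the adjunction.
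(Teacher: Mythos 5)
The paper states this proposition bare, with no proof (the text passes directly to the special case of a trivial bundle), so there is nothing of the author's to compare you against line by line; judged on its own, your argument is correct and is the natural one that the paper's definitions are set up to make work (it is essentially the Beilinson--Drinfeld computation of $\Omega^1$ of a jet algebra). The chain you use --- $\Dc$-linear derivations into $\Mc$ as sections of the square-zero extension $\Ac\oplus\Mc$ in $\Alg_{\Dc}$, transported through the $(\Jet,\Forget)$ adjunction into $\Oc_M$-derivations $\Oc_C\to\Mc$ lifting the unit $\Oc_C\to\Ac$, then the universal property of $\Omega^1_{C/M}$, extension of scalars to $\Ac[\Dc]$, and Yoneda --- is sound, and you are right that the only point needing care is that the diagonal $\Dc$-action on $\Ac\oplus\Mc$ satisfies Leibniz, which holds precisely because the $\Ac$-action on $\Mc$ is a map of $\Dc$-modules.

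Two caveats, neither fatal. First, your verification of geometric finite generation is only local, whereas the paper's definition of $\Dc$-smoothness asks for a single surjection $\Jet(\Sym_{\Oc}(\Mc)/\Ic)\twoheadrightarrow\Ac$ with $\Mc$ of finite type, not one on each trivializing open; this is repaired by choosing a closed embedding $C\hookrightarrow M\times\R^N$ over $M$ (a fiberwise Whitney embedding), which yields a surjection $\Sym_{\Oc_M}(\Oc_M^N)\twoheadrightarrow\Oc_C$ with kernel some ideal $\Ic$, and then applying $\Jet$, which preserves surjections; your local trivializations then serve to show $\Omega^1_{\Ac}$ is locally free of finite rank, hence projective of finite $\Ac[\Dc]$-presentation. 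Second, since the paper also allows the smooth (non-polynomial) setting, the appeal to the \emph{classical} universal property of K\"ahler differentials must there be read in the smooth-algebra sense, so that the geometric module $\Omega^1_{C/M}$ represents derivations of $\Oc_C$ into arbitrary modules; in the polynomial setting the paper restricts to, your appeal is exact as stated. Finally, $\Ac[\Dc]$ is noncommutative, so the induced module is strictly $\Ac[\Dc]\otimes_{\Oc_C}\Omega^1_{C/M}$ as a left module; you follow the paper's own right-handed notation, which is harmless.
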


In particular, if $\pi:C=\R\times M\to M$ is the trivial bundle with fiber coordinate $u$,
one gets the free $\Ac[\Dc]$-module of rank one
$$\Omega^1_{\Ac}\cong \Ac[\Dc]^{(\{du\})}$$
generated by the form $du$.

\subsection{Local vector fields and local operations}
\label{local-operations}
We refer to Beilinson-Drinfeld's book \cite{Beilinson-Drinfeld-Chiral}
for a complete and axiomatic study of general pseudo-tensor
categories. We will only present here the tools from this
theory needed to understand local functional calculus.
Local operations are new operations on $\Dc$-modules,
that induce ordinary multilinear operations on their De Rham cohomology.
We start by explaining the main motivation for introducing these new operations
when one does geometry with $\Dc$-algebras.

We now define the notion of local vector fields.
\begin{definition}
Let $\Ac$ be a smooth $\Dc$-algebra. The $\Ac^r[\Dc^{op}]$-module
of \emph{local vector fields} is defined by
$$\Theta_{\Ac}:=\Homc_{\Ac[\Dc]}(\Omega^1_{\Ac},\Ac[\Dc]),$$
where $\Ac^r[\Dc^{op}]:=\Ac^r\otimes_{\Ber_{M}}\Dc^{op}$
acts on the right though the isomorphism
$$\Ac^r[\Dc^{op}]\cong (\Ac\otimes_{\Oc}\Ber_{M})\otimes_{\Ber_{M}}\Dc^{op}\cong \Ac[\Dc^{op}].$$
\end{definition}

Remark now that in ordinary differential geometry, one way to define vector
fields on a variety $M$ is to take the $\Oc_{M}$-dual
$$\Theta_{M}:=\Homc_{\Oc_{M}}(\Omega^1_{M},\Oc_{M})$$
of the module of differential forms. The Lie bracket
$$[.,.]:\Theta_{M}\otimes \Theta_{M}\to \Theta_{M}$$
of two vector fields $X$ and $Y$ can then be defined from the universal derivation
$d:\Oc_{M}\to \Omega^1_{M}$, as the only
vector field $[X,Y]$ on $M$ such that for every function $f\in \Oc_{M}$, one
has the equality of derivations
$$[X,Y].f=X.i_{Y}(df)-Y.i_{X}(df).$$

In the case of a $\Dc$-algebra $\Ac$, this construction does not work directly because
the duality used to define local vector fields is not the $\Ac$-linear duality (because
it doesn't have good finiteness properties) but the $\Ac[\Dc]$-linear duality. This explains
why the Lie bracket of local vector fields and their action on $\Ac$ are new kinds of operations of the form
$$[.,.]:\Theta_{\Ac}\boxtimes \Theta_{\Ac}\to \Delta_{*}\Theta_{\Ac}$$
and
$$L:\Theta_{\Ac}\boxtimes \Ac\to \Delta_{*}\Ac$$
where $\Delta:M\to M\times M$ is the diagonal map and the box product is defined by
$$\Mc\boxtimes \Nc:=p_{1}^*\Mc\otimes p_{2}^*\Nc$$
for $p_{1},p_{2}:M\times M\to M$ the two projections.
One way to understand these construction is by looking at the natural injection
$$\Theta_{\Ac}\hookrightarrow \Homc_{\Dc}(\Ac,\Ac[\Dc])$$
given by sending $X:\Omega^1_{\Ac}\to \Ac[\Dc]$ to $X\circ d:\Ac\to \Ac[\Dc]$.
The theory of $\Dc$-modules tells us that the datum of this map is equivalent
to the datum of a $\Dc_{M\times M}^{op}$-linear map
$$L:\Theta_{\Ac}\boxtimes \Ac^r\to \Delta_{*}\Ac^r.$$
Similarly, the above formula
$$[X,Y].f=X.i_{Y}(df)-Y.i_{X}(df)$$
of ordinary differential geometry makes sense in local computations only if we see $\Theta_{\Ac}$
as contained in $\Homc_{\Dc}(\Ac,\Ac[\Dc])$ (and not in $\Homc_{\Dc}(\Ac,\Ac)$, contrary
to what is usually done), so that we must think of the bracket as a morphism of sheaves
$$\Theta_{\Ac}\to \Homc_{\Dc^{op}}(\Theta_{\Ac},\Theta_\Ac\otimes \Dc^{op})^r.$$
This is better formalized by a morphism of $\Dc_{M\times M}^{op}$-modules
$$[.,.]:\Theta_{\Ac}\boxtimes \Theta_{\Ac}\to \Delta_{*}\Theta_{\Ac}$$
as above.
Another way to understand these local operations is to make an analogy
with multilinear operations on $\Oc_{M}$-modules. Indeed, if $\Fc$, $\Gc$
and $\Hc$ are three quasi-coherent $\Oc_{M}$-modules, one has a natural
adjunction isomorphism
$$
\Homc_{\Oc_{M}}(\Fc\otimes \Gc,\Hc)\cong \Homc_{\Oc_{M}}(\Delta^*(\Fc\boxtimes\Gc),\Hc)
\cong \Homc_{\Oc_{M\times M}}(\Fc\boxtimes\Gc,\Delta_{*}\Hc),
$$
and local operations are given by a $\Dc$-linear version of the right part of
the above equality. It is better to work with this expression because of finiteness
properties of the $\Dc$-modules in play.

We now recall for the reader's convenience from Beilinson-Drinfeld \cite{Beilinson-Drinfeld-Chiral}
the basic properties of general local operations (extending straightforwardly their approach to the
case of a base supervariety $M$).

\begin{definition}
Let $(\Lc_{i})_{i\in I}$ be a finite family of $\Dc^{op}$-modules and
$\Mc$ be a $\Dc^{op}$-module. We define the \emph{space of $*$-operations}
$$P^*_{I}(\{\Lc_{i}\},\Mc):=\Hom_{\Dc_{X^I}}(\boxtimes \Lc_{i},\Delta_{*}^{(I)}\Mc)$$
where $\Delta^{(I)}:M\to M^I$ is the diagonal embedding and
$\boxtimes \Lc_{i}:=\otimes_{i\in I} p_{i}^*\Lc_{i}$ with $p_{i}:X^I\to X$ the natural
projections. The datum $(\Mod(\Dc^{op}),P_{I})$ is called the \emph{pseudo tensor
structure} on the category of $\Dc^{op}$-modules.
\end{definition}
One can define natural composition maps of pseudo-tensor operations, and
a pseudo-tensor structure is very similar to a tensor structure in many respect:
one has, under some finiteness hypothesis, good notions of internal homomorphisms and
internal $*$-operations.

The pseudo-tensor structure actually defines what one usually calls a colored operad
with colors in $\Mod(\Dc^{op})$. It is very important because it allows to easily manipulate
covariant objects like local vector fields or local Poisson brackets. The main idea of
Beilinson and Drinfeld's approach to geometry of $\Dc$-spaces is that functions and
differential forms usually multiply by using ordinary tensor product of $\Dc$-modules,
but that local vector fields and local differential operators multiply by using pseudo-tensor
operations. This gives the complete toolbox to do differential geometry on $\Dc$-spaces
in a way that is very similar to ordinary differential geometry.

Another way to explain the interest of $*$-operations, is that they induce ordinary operations
in De Rham cohomology. Since De Rham cohomology is the main
tool of local functional calculus (it gives an algebraic presentation of local functions,
differential forms and vector fields on the space $\uGamma(M,C)$ of trajectories of
a given field theory), we will make a systematic use of these operations. We refer to
Beilinson-Drinfeld \cite{Beilinson-Drinfeld-Chiral} for a proof of the following
result, that roughtly says that the De Rham functor can be extended to respect the
natural pseudo-tensor structures of its source and target categories. This means, in
simpler terms, that local, i.e., $*$-operations can be used to define usual operations on
quasi-local functionals.
\begin{proposition}
The central De Rham cohomology
functor $h:\Mod(\Dc^{op})\to \Mod(\R_{M})$
given by $h(\Mc):=h^0(\DR^r(\Mc)):=\Mc\otimes_{\Dc}\Oc$ induces a natural map
$$h:P^*_{I}(\{\Lc_{i}\},\Mc)\to \Hom(\otimes_{i}h(\Lc_{i}),h(\Mc))$$
from $*$-operations to multilinear operations. At the level of complexes, the choice
of a dg-$\Dc$-algebra resolution $\epsilon:\Pc\to \Oc$, that is flat as a
dg-$\Dc$-module (i.e., $\otimes_{\Dc}\Pc$ transforms acyclic complexes in
acyclic complexes), induces a natural morphism
$$\DR:P^*_{I}(\{\Lc_{i}\},\Mc)\longrightarrow \RHomc(\Lotimes_{i}\DR(\Lc_{i}),\DR(\Mc)).$$
\end{proposition}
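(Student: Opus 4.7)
The plan is to exploit two basic compatibility properties of the De Rham functor on $\Dc^{op}$-modules: compatibility with exterior (box) products, and compatibility with $\Dc$-module direct image along closed embeddings. Both are Kashiwara-type statements that can be made quite explicit.

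Given $\phi \in P^*_I(\{\Lc_i\}, \Mc) = \Hom_{\Dc_{M^I}^{op}}(\boxtimes_i \Lc_i, \Delta^{(I)}_* \Mc)$, I first apply the De Rham functor $\_ \otimes_{\Dc_{M^I}} \Oc_{M^I}$ to both sides. For the source, I use the Künneth-type identity $h(\boxtimes_i \Lc_i) \simeq \boxtimes_i h(\Lc_i)$, which follows because $\Oc_{M^I} \simeq \boxtimes_i \Oc_M$ as $\Dc_{M^I}$-modules and the Spencer resolution of $\Oc_{M^I}$ is the exterior tensor product of the Spencer resolutions of the factors. For the target, I use the compatibility of De Rham with pushforward along the diagonal closed embedding, $h(\Delta^{(I)}_* \Mc) \simeq \Delta_* h(\Mc)$, which comes directly from the explicit formula $\Delta^{(I)}_* \Mc = \Mc \otimes_{\Dc_M} \Dc_{M \gets M^I}$ combined with the fact that $\Dc_{M \gets M^I} \otimes_{\Dc_{M^I}} \Oc_{M^I} \simeq \Oc_M$ after the Berezinian twists cancel out.

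Combining these two compatibilities, $\phi$ induces a morphism $\boxtimes_i h(\Lc_i) \to \Delta_* h(\Mc)$ of sheaves on $M^I$. Since $h(\Lc_i)$ and $h(\Mc)$ are just $\R_M$-modules, the $(\Delta^*, \Delta_*)$ adjunction (ordinary sheaf-theoretic) turns this into a morphism $\Delta^* \boxtimes_i h(\Lc_i) \to h(\Mc)$, and by definition of the box product $\Delta^* \boxtimes_i h(\Lc_i) = \otimes_i h(\Lc_i)$. This yields the desired multilinear operation and the construction is manifestly functorial in $\phi$.

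For the derived statement, I replace $\Oc_M$ by the chosen flat dg-$\Dc_M$-algebra resolution $\epsilon: \Pc \to \Oc$, so that $\DR(\Lc_i) = \Lc_i \otimes_{\Dc_M} \Pc$ and $\DR(\Mc) = \Mc \otimes_{\Dc_M} \Pc$ are models for the derived De Rham complexes. Exterior tensor products of flat resolutions remain flat, giving a Künneth quasi-isomorphism $(\boxtimes_i \Lc_i) \otimes_{\Dc_{M^I}} \Pc^{\boxtimes I} \simeq \boxtimes_i \DR(\Lc_i)$, and the direct image compatibility lifts to a quasi-isomorphism $\Delta^{(I)}_* \Mc \otimes_{\Dc_{M^I}} \Pc^{\boxtimes I} \simeq \Delta_* \DR(\Mc)$ by the same explicit formula as above applied degree-by-degree. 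Running the previous argument at the level of complexes produces a morphism $\Lotimes_i \DR(\Lc_i) \to \DR(\Mc)$ in the derived category, giving the announced map into $\RHomc$.

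The main obstacle is verifying the direct-image compatibility $h(\Delta^{(I)}_* \Mc) \simeq \Delta_* h(\Mc)$, both discretely and after resolving. One has to track Berezinian twists carefully in the definition of $\Dc_{M \gets M^I}$, and check that the projection formula applied to the closed embedding $\Delta^{(I)}$ yields the identifications claimed; the super case requires using the Berezinian (rather than $\Omega^n$) throughout, but the proof of Proposition \ref{universal-de-Rham} together with the description of $\Ber_M$ as $\Omega^*_M \otimes_\Oc \Dc[n]$ makes this a formal bookkeeping exercise. The Künneth compatibility is more routine, essentially Eilenberg--Zilber for the Spencer complex.
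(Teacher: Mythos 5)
The paper itself gives no proof of this proposition: it is quoted with a reference to Beilinson--Drinfeld's book, and your argument is essentially the standard one from that source (compatibility of $h$ with external products, the projection formula for the diagonal closed embedding, and the $(\Delta^{-1},\Delta_{\bullet})$ adjunction), so in substance you are following the same route as the paper's cited proof. Two small points deserve correction: for \emph{right} $\Dc$-modules the direct image along $\Delta^{(I)}$ is $\Delta_{\bullet}(\Mc\otimes_{\Dc_M}\Dc_{M\to M^I})$, so there are no Berezinian twists to cancel, and the identification $h(\Delta^{(I)}_*\Mc)\cong\Delta_{\bullet}h(\Mc)$ is an immediate consequence of the projection formula; and at the level of complexes the essential input is the multiplication of the dg-$\Dc$-algebra $\Pc$, which yields a morphism $\Pc^{\otimes_{\Oc}I}\to\Pc$ and hence a natural morphism $\Delta^{(I)}_*\Mc\otimes_{\Dc_{M^I}}\Pc^{\boxtimes I}\to\Delta_{\bullet}\DR(\Mc)$, whereas the quasi-isomorphism you assert is not automatic from the stated flatness hypothesis once you tensor with $\Mc$ over $\Dc$ --- but it is also not needed, since composing $\otimes_i\DR(\Lc_i)=\Delta^{-1}(\boxtimes_i\DR(\Lc_i))\to\Mc\otimes_{\Dc}\Pc^{\otimes_{\Oc}I}\to\DR(\Mc)$ already produces the required element of $\RHomc(\Lotimes_i\DR(\Lc_i),\DR(\Mc))$.
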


\section{Gauge theories and the covariant phase space}
We are inspired, when discussing the Batalin-Vilkovisky (later called BV) formalism,
by a huge physical literature, starting with Peierls \cite{Peierls} and De Witt \cite{DeWitt}
for the covariant approach to quantum field theory,
and with  \cite{Henneaux-Teitelboim} and \cite{Fisch-Henneaux} as general references for the BV formalism.
More specifically, we also use Stasheff's work \cite{Stasheff2} and \cite{Stasheff1} as homotopical inspiration, and
\cite{fulp-lada-stasheff},  \cite{Barnich-BV-2010} and \cite{Cattaneo-Felder} for
explicit computations.

\subsection{A finite dimensional toy model}
\label{toy-model-BV}
In this section, we will do some new kind of differential geometry on spaces of the
form $X=\uSpec_{\Dc}(\Ac)$ given by spectra of $\Dc$-algebras,
that encode solution spaces of non-linear partial differential equations in a coordinate free fashion
(to be explained in the next section).
 
Before starting this general description, that is entailed of technicalities, we present a finite
dimensional analog, that can be used as a reference
to better understand the constructions done in the setting of $\Dc$-spaces.

Let $H$ be a finite dimensional smooth variety (analogous to the space of histories
$H\subset \uGamma(M,C)$ of a given Lagrangian variational problem)
and $S:H\to \R$ be a smooth function (analogous to an action functional on the space of histories).
Let $d:\Oc_{H}\to \Omega^1_{H}$ be the De Rham differential and
$$\Theta_{H}:=\Homc_{\Oc_{H}}(\Omega^1_{H},\Oc_{H}).$$
There is a natural biduality isomorphism
$$\Omega^1_{H}\cong \Homc_{\Oc_{H}}(\Omega^1_{H},\Oc_{H}).$$
Let $i_{dS}:\Theta_{H}\to \Oc_{H}$ be given by the insertion of vector
fields in the differential $dS\in \Omega^1_{H}$ of the given function $S:H\to \R$.

The claim is that there is a natural homotopical Poisson structure on the space
$T$ of critical points of $S:H\to \R$, defined by
$$T=\{x\in H,\; d_{x}S=0\}.$$
Define the algebra of functions $\Oc_{T}$ as the quotient of
$\Oc_{H}$ by the ideal $\Ic_{S}$ generated by the equations $i_{dS}(\vec v)=0$ for all
$\vec v\in \Theta_{H}$. Remark that $\Ic_{S}$ is the image of the insertion map
$i_{dS}:\Theta_{H}\to \Oc_{H}$ and is thus locally finitely generated by the image
of the basis vector fields $\vec x_{i}:=\frac{\partial}{\partial x_{i}}$ that correspond to the local coordinates
$x_{i}$ on $H$. Now let $\Nc_{S}$ be the kernel of $i_{dS}$. It describes
the relations between the generating equations $i_{dS}(\vec x_{i})=\frac{\partial S}{\partial x_{i}}$ of $\Ic_{S}$.

The differential graded $\Oc_{H}$-algebra
$$\Oc_{P}:=\Sym_{dg}([\underset{-1}{\Theta_{H}}\overset{i_{dS}}{\longrightarrow}\underset{0}{\Oc_{H}}])$$
is isomorphic, as a graded algebra, to the algebra of multi-vectors
$$\wedge^*_{\Oc_{H}}\Theta_{H}.$$
This graded algebra is equipped with an odd, so called Schouten bracket, given by extending
the Lie derivative
$$
L:\Theta_{H}\otimes \Oc_{H}\to \Oc_{H}
$$
and Lie bracket
$$
[.,.]:\Theta_{H}\otimes \Theta_{H}\to \Theta_{H}
$$
by Leibniz's rule.

\begin{proposition}
The Schouten bracket is compatible with the insertion map $i_{dS}$ and makes
$\Oc_{P}$ a differential graded odd Poisson algebra. The Lie bracket on $\Theta_{H}$
induces a Lie bracket on $\Nc_{S}$.
\end{proposition}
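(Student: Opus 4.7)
My plan is to reduce both statements to a single clean structural observation: the differential $d$ on $\Oc_P$ coincides, up to sign, with the adjoint action $[S,-]$ of the function $S\in \Oc_H\subset \Oc_P$ for the Schouten--Nijenhuis bracket. Concretely, I would first recall (from Koszul, Kosmann-Schwarzbach) that on the graded commutative algebra $\wedge^*_{\Oc_H}\Theta_H$ there is a unique odd biderivation $[-,-]$ of degree $+1$ extending the Lie bracket on $\Theta_H$ and the Lie derivative action $L:\Theta_H\otimes \Oc_H\to \Oc_H$, and that it satisfies graded antisymmetry and graded Jacobi. This is the Schouten bracket, and its biderivation property is precisely the (odd) Poisson compatibility with the commutative product on $\Oc_P$.

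Next I would observe that, for $X\in \Theta_H$, one has $[S,X]=-L_XS=-i_{dS}(X)$, while $[S,f]=0$ for $f\in \Oc_H$. Since the derivation $-[S,-]$ and the differential $d$ are both odd $\Oc_H$-linear-on-$\Oc_H$ derivations of $\Oc_P$ that agree on the generating module $\Theta_H$, they coincide on all of $\Oc_P$. The compatibility of $d$ with the Schouten bracket,
\[
d[a,b] \;=\; [da,b] + (-1)^{|a|+1}[a,db],
\]
then follows directly from graded Jacobi applied to the triple $(S,a,b)$, since adjoint actions in any graded Lie bracket are derivations of the bracket. This shows that $(\Oc_P,d,\wedge,[-,-])$ is a dg odd Poisson (i.e.\ dg Gerstenhaber) algebra. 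As a free bonus, $d^2=\tfrac12[[S,S],-]$ and $[S,S]=0$ because the Schouten bracket of two elements of $\Oc_H$ vanishes, recovering the fact that the Koszul-type complex of Proposition is indeed a differential.

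For the last assertion, let $X,Y\in \Nc_S=\ker(i_{dS})$, so that $X(S)=Y(S)=0$. Then
\[
i_{dS}([X,Y]) \;=\; [X,Y](S) \;=\; X(Y(S))-Y(X(S)) \;=\; 0,
\]
so $[X,Y]\in \Nc_S$; this is equivalently the restriction to $\Theta_H\subset \Oc_P$ of the Leibniz rule $d[X,Y]=[dX,Y]\pm[X,dY]$ established above. The only genuine point of care in the whole proof is bookkeeping of signs in the Schouten bracket on a complex concentrated in negative degrees, which I expect to be the main (mild) obstacle; everything else is formal consequence of graded Jacobi and the universal property defining the Schouten bracket on $\wedge^*_{\Oc_H}\Theta_H$.
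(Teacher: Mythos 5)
Your proof is correct. The paper itself states this proposition without proof, treating it as a classical fact about the Schouten bracket on $\wedge^*_{\Oc_{H}}\Theta_{H}$, so there is no argument to compare against line by line; but your route is exactly the standard one, and your key identification of the Koszul differential with the adjoint action, $d=\pm[S,\cdot]$ (so that compatibility follows from graded Jacobi, $d^2=\tfrac12[[S,S],\cdot]=0$ from $[S,S]=0$, and closure of $\Nc_{S}=\ker i_{dS}$ under $[\cdot,\cdot]$ from $[X,Y](S)=X(Y(S))-Y(X(S))$), is precisely the mechanism the paper invokes later when it notes that $\{S,\cdot\}=i_{dS}$ on the BV algebra in the proof of the classical master equation theorem. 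The only caveat, which you already flag, is fixing a consistent sign convention for the odd bracket on the negatively graded symmetric algebra; with that settled, the argument is complete.
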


Now, we will compute, following Tate \cite{Tate4}, a cofibrant resolution
$$\Bc\overset{\sim}{\to} \Oc_H/\Ic_S=:\Oc_T$$
of the algebra of functions on the critical space, as an $\Oc_H$-algebra. This proceeds
by adding inductively to the algebra $\Oc_P$ higher degree generators to annihilate
its cohomology. More precisely, if $H^1(\Oc_P,i_{dS})\neq 0$, we choose a submodule
$$\gfk_S^{0}\subset \Ker(i_{dS})\subset \Theta_H$$
such that
$$H^1\left(\Sym_{dg-\Oc_H}([\underset{-2}{\gfk_S^0}\to \underset{-1}{\Theta_H}\to \underset{0}{\Oc_H}])\right)=0.$$
And we apply the same procedure by adding a chosen $\gfk_S^1$ to kill the $H^2$ of
the algebra
$$\Sym_{dg-\Oc_H}([\underset{-2}{\gfk_S^0}\to \underset{-1}{\Theta_H}\to \underset{0}{\Oc_H}]),$$
and so on...
The graded module $\gfk_S$ is the finite dimensional analog of the space of gauge (and higher gauge)
symmetries.
Suppose that $\gfk_S$ is bounded with projective components of finite rank.

\begin{definition}
The \emph{finite dimensional BV algebra} associated to $S:H\to \R$ is the bigraded $\Oc_{H}$-algebra
$$
\Oc_{BV}:=
\Sym_{bigrad}\left(\left[
\begin{array}{ccccc}
\gfk_{S}[2] &\oplus & \Theta_{H}[1] &\oplus & \Oc_{H}\\
&&&&\oplus\\
&&&& ^t\gfk_{S}^{*}[-1]
\end{array}\right]\right),
$$
where $^t\gfk_{S}^*$ is the $\Oc_{H}$-dual of the graded module $\gfk_{S}$ transposed
to become a vertical ascending graded module.
\end{definition}

The main theorem of the Batalin-Vilkovisky formalism, that is the aim of this
section, is the following:
\begin{theorem}
There exists a non-trivial extension
$$S_{cm}=S_{0}+\sum_{i\geq 1}S_{i}\in \Oc_{BV}$$
of the classical function $S$ that fulfills the \emph{classical master equation}
$$\{S_{cm},S_{cm}\}=0.$$
The differential $D=\{S_{cm},.\}$ gives $\Oc_{BV}$ the structure of a differential graded odd
Poisson algebra.
\end{theorem}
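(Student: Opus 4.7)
The plan is to construct $S_{cm}$ order by order via homological perturbation theory, using the Koszul--Tate acyclicity built in through the modules $\gfk_S^i$, and then deduce the differential structure formally from the master equation.

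First, I would set up the odd Poisson bracket $\{\cdot,\cdot\}$ on $\Oc_{BV}$. On the Koszul--Tate sub-algebra $\Sym_{\Oc_H}(\Theta_H[1]\oplus \gfk_S[2])$ it extends the Schouten bracket on $\wedge^*_{\Oc_H}\Theta_H$ recalled in the previous proposition: the action of $\Theta_H$ on $\Oc_H$, the Lie bracket on $\Theta_H$, and the induced bracket on $\gfk_S\subset\Ker(i_{dS})$. One then declares the generators ${}^t\gfk_S^*[-1]$ to be canonically conjugate to $\gfk_S[2]$ under the evaluation pairing, with all remaining brackets determined by the graded Leibniz rule. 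Globally this is the canonical antibracket on the $(-1)$-shifted cotangent space of the graded manifold $\uSpec\bigl(\Sym(\Oc_H\oplus{}^t\gfk_S^*[-1])\bigr)$.

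Second, I would expand $S_{cm}=\sum_{i\geq 0}S_i$ where $S_i$ is homogeneous of antifield polynomial degree $i$ in $\Theta_H[1]\oplus\gfk_S[2]$, with $S_0=S$. The natural candidate for $S_1$ couples Noether generators to their ghosts:
\[
S_1 \;=\; \sum_{a} c^{a}\, X_{a}^{\,*},
\]
where $(X_a)$ generates $\gfk_S^0\subset\Ker(i_{dS})\subset\Theta_H$, $X_a^{\,*}$ denotes its image in $\Theta_H[1]$, and $(c^a)$ is the dual basis of ${}^t\gfk_S^*[-1]$. Since $\{S_0,S_0\}=0$ (the variables of $\Oc_H$ have no $\Oc_H$-conjugate), the lowest non-trivial part of $\{S_{cm},S_{cm}\}=0$ reads $i_{dS}(X_a)=0$, which holds by the choice of $\gfk_S^0$.

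The key step, and the main obstacle, is the inductive construction of the higher $S_i$. Assume $S_{(n)}:=S_0+\cdots+S_n$ satisfies $\{S_{(n)},S_{(n)}\}\equiv 0$ modulo antifield degree $\geq n+1$, and let $R_{n+1}$ denote the obstruction piece in antifield degree $n+1$. A direct calculation using the graded Jacobi identity shows that $R_{n+1}$ is closed for the Koszul--Tate differential $\delta:=\{S_0+S_1,\cdot\}$, which restricts on the ghost-free sub-algebra to the differential of the cofibrant resolution $\Sym_{dg}([\gfk_S\to\Theta_H\to\Oc_H])\xrightarrow{\sim}\Oc_T$ built via the Tate construction. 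By the very choice of the modules $\gfk_S^k$, that resolution is acyclic in every strictly positive antifield degree; hence $R_{n+1}=-2\,\delta S_{n+1}$ for some $S_{n+1}$ of antifield degree $n+1$, and iteration yields the formal series $S_{cm}$. The delicate aspect is to check that ghost-dependent contributions from the higher $S_i$ do not spoil $\delta$-acyclicity: this works because $\delta$ preserves ghost polynomial degree, so one may solve the equation separately in each ghost-monomial component, where only the Koszul--Tate acyclicity is used. Granted $\{S_{cm},S_{cm}\}=0$, the operator $D:=\{S_{cm},\cdot\}$ squares to zero by the graded Jacobi identity,
\[
D^{2}(f)\;=\;\{S_{cm},\{S_{cm},f\}\}\;=\;\tfrac{1}{2}\,\{\{S_{cm},S_{cm}\},f\}\;=\;0,
\]
and is a derivation of both the graded product and the antibracket as a Hamiltonian vector field for an odd Poisson structure; this gives $\Oc_{BV}$ the asserted differential graded odd Poisson algebra structure.
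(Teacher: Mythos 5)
Your construction is essentially the paper's own argument (carried out there in the general $\Dc$-geometric setting of the derived covariant phase space): first arrange the ghost--antighost couplings so that the low-order part of $\{S_{cm},\cdot\}$ reproduces the Koszul--Tate differential, then build the higher $S_i$ inductively, using the graded Jacobi identity to show each obstruction is $d_{KT}$-closed and the acyclicity of the Koszul--Tate resolution in positive antifield degree to solve for $S_{i}$, with $D^2=0$ following formally from the master equation. The only point you leave implicit---which the paper also treats only through its boundedness hypothesis on $\gfk_{S}$ and the requirement that the induction terminate---is why the resulting series defines an element of $\Oc_{BV}$ rather than merely of its completion.
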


The corresponding derived space $\R\uSpec(\Oc_{BV},D)$
can be thought as a kind of derived Poisson reduction
$$\R\uSpec(\Oc_{BV},D)\cong \R\uSpec(\Oc_{H}/\Ic_{H})\underset{\R}{/}\Nc_{S}$$
that corresponds to taking the quotient of the homotopical critical space of $S$
(cofibrant replacement of $\Oc_{T}=\Oc_{H}/\Ic_{H})$) by the foliation induced by
the Noether relations $\Nc_{S}$.

\begin{example}
To be even more explicit, let us treat a simple example of the above construction.
Let $H=\R^2$ be equipped with the polynomial function algebra $\Oc_{H}=\R[x,y]$.
The differential one forms on $H$ are given by the free $\Oc_{H}$-modules
$$\Omega^1_{H}=\R[x,y]^{(dx,dy)}.$$
Let $S\in \Oc_{H}$ be the function $F(x,y)=\frac{x^2}{2}$. One then has
$dS=xdx$. The module $\Theta_{H}$ of vector fields is the free module
$$\Theta_{H}=\R[x,y]^{\left(\frac{\partial}{\partial x},\frac{\partial}{\partial y}\right)}$$
and the insertion map is given by the $\R[x,y]$-module morphism
$$
\begin{array}{cccc}
i_{dS}: 	& \Theta_{H} & \to & \Oc_{H}\\
		& \vec v		&\mapsto & \langle dS,\vec v\rangle,
\end{array}
$$
and, in particular,
$i_{dS}\left(\frac{\partial}{\partial x}\right)=x$, and
$i_{dS}\left(\frac{\partial}{\partial y}\right)=0$.
The image of the insertion map is given by the ideal $\Ic_{S}=(x)$ in $\Oc_{H}=\R[x,y]$.
The kernel $\Nc_{S}$ of the insertion map is the free submodule
$$
\Nc_{S}=\gfk_{S}^0=
\R[x,y]^{(\frac{\partial}{\partial y})}
$$
of $\Theta_{H}$.
We apply the above inductive construction of a Koszul-Tate resolution to get
a graded module $\gfk_S$.
The corresponding cofibrant resolution is given by a quasi-isomorphism
$$
\Bc_S \overset{\sim}{\to} \Oc_{H}/\Ic_{S}.
$$
The obtained algebra $\Bc$ is called the Koszul-Tate resolution of $\Oc_{H}/\Ic_{S}$.
Now the bigraded BV algebra $\Oc_{BV}$ is given by
$$
\Oc_{BV}  := 
\Sym_{bigrad}\left(\left[
\begin{array}{ccccc}
\gfk_{S}[2] &\oplus & \Theta_{H}[1] &\oplus & \Oc_{H}\\
&&&&\oplus\\
&&&& ^t\gfk_{S}^{*}[-1]
\end{array}\right]\right).
$$
The graded version is the $\Oc_{H}$-algebra
$$
\Oc_{BV}=
\Sym^*(\gfk_{S})\otimes
\wedge^*\Theta_{H}\otimes \wedge^*\gfk_{S}^*
$$
on $4=1+2+1$ graded variables.
The BV formalism gives a way to combine the Koszul-Tate
differential with the Chevalley-Eilenberg differential of the action of $\Nc_S$ on
$\Oc_T$, by constructing an $S_{cm}\in \Oc_{BV}$
such that some components of the bracket $\{S_{cm},.\}$
induce both differentials on the corresponding generators
of the BV algebra.
\end{example}

The aim of this section is to generalize the above construction to local variational problems,
where $H\subset \uGamma(M,C)$ is a space of histories (subspace of the space of
sections of a bundle $\pi:C\to M$) and $S:H\to \R$ is given by the integration
of a Lagrangian density.
The main difficulties that we will encounter and overcome trickily in this generalization are that:
\begin{enumerate}
\item The $\Dc$-module $\Dc\otimes_{\Oc}\Dc$ is not $\Dc$-coherent, so that a projective
resolution of $\Nc_{S}$ will not be dualizable in practical cases. This will impose us
to use finer resolutions of the algebra $\Oc_{T}$ of functions on the critical space.
\item Taking the bracket between two \emph{densities} of vector fields on the space $H$ of histories
is not an $\Oc_{M}$-bilinear operation but a new kind of operation, called a locally
bilinear operation and described in Section \ref{local-operations}.
\item The ring $\Dc$ is not commutative. This will be overcome by using the equivalence
between $\Dc$-modules and $\Dc^{op}$-modules given by tensoring by $\Ber_{M}$.
\end{enumerate}
Out of the above technical points, the rest of the constructions of this section
are completely parallel to what we did on the finite dimensional toy model.

\subsection{General gauge theories}
\begin{proposition}
Let $M$ be a super-variety and $\Ac$ be a smooth $\Dc$-algebra.
There is a natural isomorphism, called the local interior product
$$
\begin{array}{cccc}
i: 	& h(\Omega^1_{\Ac}):=\Ber_{M}\otimes_{\Dc}\Omega^1_{\Ac} & \longrightarrow &
\Homc_{\Ac[\Dc]}(\Theta_{\Ac}^\ell,\Ac)\\
	& \omega	& \longmapsto & [X\mapsto i_{X}\omega].
\end{array}
$$
\end{proposition}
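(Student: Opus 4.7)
The plan is to build the map $i$ directly from the definition $\Theta_{\Ac} = \Homc_{\Ac[\Dc]}(\Omega^1_{\Ac}, \Ac[\Dc])$ of local vector fields, and then to verify it is an isomorphism by invoking the biduality statement that becomes available under the $\Dc$-smoothness hypothesis on $\Ac$.

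First I would construct $i$ on representatives. An element of $h(\Omega^1_{\Ac}) = \Ber_M \otimes_{\Dc} \Omega^1_{\Ac}$ is represented by $\mu \otimes \omega$ with $\mu \in \Ber_M$ and $\omega \in \Omega^1_{\Ac}$. A local vector field $X \in \Theta_{\Ac}^\ell$ corresponds, after undoing the Berezinian twist that converts right $\Dc$-modules into left ones, to a right $\Ac[\Dc]$-linear morphism $\tilde X : \Omega^1_{\Ac} \to \Ac[\Dc]$. The pair $\mu \otimes \tilde X(\omega)$ lives in $\Ber_M \otimes_{\Dc} \Ac[\Dc]$, and the elementary identification $\Ber_M \otimes_{\Dc} \Ac[\Dc] \cong \Ac^r$, combined with the reabsorption of the twist on $X$, produces a well-defined $i_X \omega \in \Ac$. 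I would then check that this pairing descends to $h(\Omega^1_{\Ac})$ -- the Leibniz rule for the $\Dc$-action on $\Omega^1_{\Ac}$ is precisely what matches the relations defining $\otimes_{\Dc}$ -- and that $X \mapsto i_X \omega$ is $\Ac[\Dc]$-linear.

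Next I would establish bijectivity via biduality. By $\Dc$-smoothness, $\Omega^1_{\Ac}$ is projective of finite $\Ac[\Dc]$-presentation, so the canonical biduality map
$$\Omega^1_{\Ac} \;\longrightarrow\; \Homc_{\Ac[\Dc]^{op}}(\Theta_{\Ac}, \Ac[\Dc])$$
is an isomorphism. Applying the central De Rham functor $h = \Ber_M \otimes_{\Dc}(-)$ and commuting it past the internal Hom -- which is permitted exactly by the finite-presentation projectivity -- rewrites the right-hand side as $\Homc_{\Ac[\Dc]^{op}}(\Theta_{\Ac}, h(\Ac[\Dc]))$. Using $h(\Ac[\Dc]) \cong \Ac^r$ together with the equivalence between left and right $\Dc$-modules provided by $\Ber_M$ (which simultaneously turns $\Theta_{\Ac}$ into $\Theta_{\Ac}^\ell$ and $\Ac^r$ into $\Ac$) identifies this target with $\Homc_{\Ac[\Dc]}(\Theta_{\Ac}^\ell, \Ac)$, and inspection on pure tensors shows the resulting composite agrees with $i$.

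The main obstacle will be the bookkeeping of left versus right $\Dc$-module structures together with the successive Berezinian twists, since almost every step involves passing between these conventions and each adjunction has to be aligned with the correct side. A clean way to handle this is a local verification: $\Dc$-smoothness allows one to assume $\Omega^1_{\Ac} \cong \Ac[\Dc]^N$ free of finite rank on a Zariski cover, in which case both sides are computed explicitly and the map $i$ is manifestly an isomorphism; naturality in $\Ac$ then propagates the statement globally.
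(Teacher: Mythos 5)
Your proposal is correct and follows essentially the same route as the paper: the paper's proof likewise invokes the biduality isomorphism $\Omega^1_{\Ac}\to \Homc_{\Ac^r[\Dc^{op}]}(\Theta_{\Ac},\Ac^r[\Dc^{op}])$ granted by $\Dc$-smoothness and then tensors with $\Ber_{M}$ over $\Dc$. Your extra steps (explicit description of $i$ on pure tensors, commuting $h=\Ber_M\otimes_{\Dc}(-)$ past the internal Hom via finite $\Ac[\Dc]$-presentation and projectivity) simply make explicit what the paper leaves implicit in ``tensoring this map with $\Ber_M$ over $\Dc$ gives the desired result.''
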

\begin{proof}
By definition, one has
$$\Theta_{\Ac}:=\Homc_{\Ac[\Dc]}(\Omega^1_{\Ac},\Ac[\Dc])$$
and since $\Ac$ is $\Dc$-smooth, the biduality map
$$\Omega^1_{\Ac}\to \Homc_{\Ac^r[\Dc^{op}]}(\Theta_{\Ac},\Ac^r[\Dc^{op}])$$
is an isomorphism. Tensoring this map with $\Ber_{M}$ over $\Dc$ gives the desired result.
\end{proof}

\begin{definition}
If $\Ac$ is a smooth $\Dc$-algebra and $\omega\in h(\Omega^1_{\Ac})$, the $\Ac[\Dc]$ linear
map
$$i_{\omega}:\Theta_{\Ac}^\ell\to \Ac$$
is called the \emph{insertion map}. Its kernel $\Nc_{\omega}$ is called the $\Ac[\Dc]$-module
of \emph{Noether identities} and its image $\Ic_{\omega}$ is called the \emph{Euler-Lagrange ideal}.
\end{definition}

If $\Ac=\Jet(\Oc_{C})$ for $\pi:C\to M$ a bundle and $\omega=dS$,
the Euler-Lagrange ideal $\Ic_{dS}$ is locally generated as an $\Ac[\Dc]$-module
by the image of the local basis of vector fields in
$$\Theta_{\Ac}^\ell\cong \Ac[\Dc]\otimes_{\Oc_{C}}\Theta_{C/M}.$$
If $M$ is of dimension $n$ and the relative dimension of $C$ over $M$ is
$m$, this gives $n$ equations (indexed by $i=1,\dots,n$, one for each generator
of $\Theta_{C/M}$) given in local coordinates by
$$\sum_{\alpha}(-1)^{|\alpha|}D_{\alpha}\left(\frac{\partial L}{\partial x_{i,\alpha}}\right)\circ (j_{\infty}x)(t)=0,$$
where $S=[Ld^{n}t]\in h(\Ac)$ is the local description of the Lagrangian density.

We now define the notion of local variational problem with nice histories. This type
of variational problem can be studied completely by only using geometry of $\Dc$-spaces.
This gives powerful finiteness and biduality results that are necessary to study conceptually
general gauge theories.
\begin{definition}
Let $\pi:C\to M$, $H\subset \uGamma(M,C)$ and $S:H\to \R$ be a Lagrangian
variational problem, and suppose that $S$ is a local functional, i.e., if $\Ac=\Jet(\Oc_{C})$,
there exists $[L\omega]\in h(\Ac):=\Ber_{M}\otimes_{\Dc}\Ac$ and $\Sigma\in H_{c,n}(M)$
such that $S=S_{\Sigma,L\omega}$. The variational problem is called
a \emph{local variational problem with nice histories} if the space of critical points
$T=\{x\in H,\;d_{x}S=0\}$ identifies with the space $\uSol(\Ac/\Ic_{dS})$ of
solutions to the Euler-Lagrange equation.
\end{definition}

The notion of variational problem with nice histories can be explained in simple terms
by looking at the following simple example.
The point is to define $H$ by adding boundary conditions to elements in $\uGamma(M,C)$, so
that the boundary terms of the integration by part, that we do to compute the variation
$d_{x}S$ of the action, vanish.

\begin{example}
Let $\pi:C=\R^3\times [0,1]\to [0,1]=M$, $\Ac=\R[t,x_{0},x_{1},\dots]$ be the corresponding
$\Dc_{M}$-algebra with action of $\partial_{t}$ given by $\partial_{t}x_{i}=x_{i+1}$,
and $S=\frac{1}{2}m(x_{1})^2dt\in h(\Ac)$ be the local action functional for the variational
problem of newtonian mechanics for a free particle in $\R^3$.
The differential of $S:\uGamma(M,C)\to \uR$
at $u:U\to \uGamma(M,C)$ along the vector field $\vec u\in \Theta_{U}$ is given by integrating by part
$$
\langle d_{x}S,\vec u\rangle=\int_{M}\langle -m\partial_{t}^2x,\frac{\partial x}{\partial \vec u}\rangle dt+
\left[\langle \partial_{t}x,\frac{\partial x}{\partial \vec u}\rangle \right]_{0}^1.
$$
The last term of this expression is called the boundary term and we define nice histories
for this variational problem by fixing the starting and ending point of trajectories to annihilate
this boundary term:
$$H=\{x\in \uGamma(M,C),\;x(0)=x_{0},\;x(1)=x_{1}\}$$
for $x_{0}$ and $x_{1}$ some given points in $\R^3$.
In this case, one has
$$T=\{x\in H,\;d_{x}S=0\}\cong \uSol(\Ac/\Ic_{dS})$$
where $\Ic_{dS}$ is the $\Dc$-ideal in $\Ac$ generated by $-mx_{2}$, i.e., by
Newton's differential equation for the motion of a free particle in $\R^3$.
The critical space is thus given by
$$T=\{x\in H,\;\partial_{t}x\textrm{ is constant on }[0,1]\},$$
i.e., the free particle is moving on the line from $x_{0}$ to $x_{1}$ with constant speed.
\end{example}

\begin{definition}
A \emph{general gauge theory} is a local variational problem with nice histories.
\end{definition}

\subsection{Regularity conditions and higher Noether identities}
We now describe regularity properties of gauge theories, basing our exposition
on the article \cite{Fred-histories-and-observables}.
We will moreover use the language of homotopical and derived geometry in the sense
of Toen-Vezzosi \cite{Toen-Vezzosi-HAG-II} to get geometric insights on
the spaces in play in this section
(See \cite{Fred-cours-m2-physique} for an introduction and references).
We denote $\Ac\mapsto Q\Ac$ a cofibrant replacement functor in a given model category.
Recall that all differential graded algebras are fibrant for their standard model structure.

In all this section, we set $\pi:C\to M$, $H\subset \uGamma(M,C)$, $\Ac=\Jet(\Oc_{C})$ and $S\in h(\Ac)$
a gauge theory. The kernel of its insertion map
$$i_{dS}:\Theta_{\Ac}^\ell\to \Ac$$
is called the space $\Nc_{S}$ of Noether identities.
Its right version
$$\Nc_{S}^r=\Ber_{M}\otimes \Nc_{S}\subset \Theta_{\Ac}$$
is called the space of Noether gauge symmetries.

\begin{definition}
The \emph{derived critical space} of a gauge theory is the differential graded $\Ac$-space
$$
\begin{array}{ccccc}
P:=\Spec(\Ac_{P}):& dg-\Ac-\Alg & \to 	  &\SSets\\
    & \Rc		&\mapsto & s\Hom_{dg-Alg_{\Dc}}(\Ac_{P},\Rc).
\end{array}
$$
whose coordinate differential graded algebra is
$$
\Ac_{P}:=
\Sym_{dg}(
[\Theta_{\Ac}^\ell[1]\overset{i_{dS}}{\longrightarrow}\Ac]).
$$
A \emph{non-trivial Noether identity} is a class in $H^1(\Ac_{P},i_{dS})$.
\end{definition}

We refer to Beilinson and Drinfeld's book \cite{Beilinson-Drinfeld-Chiral}
for the following proposition.
\begin{proposition}
The local Lie bracket of vector fields extends naturally to an odd local (so-called Schouten)
Poisson bracket on the dg-$\Ac$-algebra $\Ac_{P}$
of coordinates on the derived critical space.
\end{proposition}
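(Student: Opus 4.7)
The plan is to construct the Schouten bracket on $\Ac_P$ as the unique graded Leibniz extension, inside the pseudo-tensor category of $\Dc^{op}$-modules, of the local Lie bracket and local Lie derivative recalled in Section \ref{local-operations}. The strategy exactly mirrors the finite-dimensional toy model of Section \ref{toy-model-BV}, with every bilinear operation replaced by a $*$-operation of the form $\Mc\boxtimes\Nc\to\Delta_*\Lc$, and with an overall degree shift to odd coming from $\Theta_\Ac^\ell[1]$.

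First, I would fix the bracket on generators of $\Ac_P = \Sym_{dg}([\Theta_\Ac^\ell[1]\to\Ac])$. Section \ref{local-operations} provides $*$-operations
$$[\cdot,\cdot]\in P^*_{\{1,2\}}(\{\Theta_\Ac,\Theta_\Ac\},\Theta_\Ac)\qquad\text{and}\qquad L\in P^*_{\{1,2\}}(\{\Theta_\Ac,\Ac^r\},\Ac^r),$$
which, after the $\Ber_M$-twist exchanging left and right $\Dc$-modules, yield local operations between $\Theta_\Ac^\ell$ and $\Ac$. On generators I set $\{X,Y\}:=[X,Y]$, $\{X,f\}:=L_X f =: -\{f,X\}$, and $\{f,g\}:=0$, for $X,Y$ odd generators coming from $\Theta_\Ac^\ell[1]$ and $f,g\in\Ac$ of even degree zero.

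Second, I would extend these data to a $*$-operation
$$\{\cdot,\cdot\}:\Ac_P\boxtimes\Ac_P\to\Delta_*\Ac_P$$
of odd cohomological degree by imposing the graded Leibniz rule in each slot. The fact that iterated compositions of $*$-operations are again $*$-operations, together with the universal property of $\Sym_{dg}$, guarantees both existence and uniqueness of this extension. Graded antisymmetry, graded Jacobi, and Leibniz with respect to the product of $\Ac_P$ reduce, by repeated Leibniz contractions, to the corresponding identities on generators, and these are precisely the axioms of the local Lie algebroid $(\Theta_\Ac,\Ac)$ as formulated in the pseudo-tensor setting of \cite{Beilinson-Drinfeld-Chiral}.

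Finally, I would verify that the Koszul differential $i_{dS}$ is a derivation of $\{\cdot,\cdot\}$, which is what makes $(\Ac_P,i_{dS},\{\cdot,\cdot\})$ a dg odd Poisson algebra. On generators $X\in\Theta_\Ac^\ell[1]$ one has $\{S,X\} = L_X S = i_{dS}(X)$, so $i_{dS}$ coincides with the inner derivation $\{S,-\}$ on generators, and hence, by Leibniz extension, on all of $\Ac_P$. Since $S$ is even of degree zero and the bracket is odd, graded antisymmetry forces $\{S,S\}=0$, and graded Jacobi then delivers both $i_{dS}^2=0$ and the derivation property of $i_{dS}$ with respect to $\{\cdot,\cdot\}$. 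The hard part, in my view, is the coherent bookkeeping of the $*$-operations under the Leibniz extension: unlike in the finite-dimensional toy model, the bracket is not $\Oc_M$-bilinear but $\Dc_{M\times M}^{op}$-linear and lands in $\Delta_*\Ac_P$, so one must check that every Leibniz step remains a local operation supported on the small diagonal. This is handled by the composition calculus of $*$-operations and the idempotence of $\Delta_*$ under iterated diagonal pullbacks; once this formalism is in place, the finite-dimensional Schouten construction transports over essentially verbatim.
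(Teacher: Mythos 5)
The paper does not actually prove this proposition: it simply defers to Beilinson--Drinfeld \cite{Beilinson-Drinfeld-Chiral}, where the statement is the construction of the coisson (Schouten) structure on the symmetric algebra of a Lie$^*$ algebroid. Your proposal reconstructs exactly that argument: fix the bracket on generators via the $*$-operations $[\cdot,\cdot]$ and $L$ of Section \ref{local-operations}, extend by the graded Leibniz rule using the composition calculus of pseudo-tensor operations and the universal property of $\Sym_{dg}$, and check the axioms on generators. This is the right construction, it is the one the cited source carries out, and your emphasis on the genuine subtlety (every Leibniz step must remain a $\Dc^{op}_{M\times M}$-linear operation landing in $\Delta_*\Ac_P$, handled by composition of $*$-operations) is exactly where the content lies. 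So as a replacement for the missing proof it is sound in outline.

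One step of your final paragraph is wrongly justified, though the conclusion survives. You claim that ``since $S$ is even of degree zero and the bracket is odd, graded antisymmetry forces $\{S,S\}=0$.'' For an odd bracket the symmetry is taken with respect to the shifted degrees, so for an even element $a$ the identity reads $\{a,a\}=+\{a,a\}$ and imposes nothing; indeed the whole point of the classical master equation $\{S_{cm},S_{cm}\}=0$ later in the paper is that the self-bracket of a degree-zero element is a nontrivial condition. In the present situation $\{S,S\}=0$ holds for a more elementary reason: by your own definition the bracket vanishes on $\Ac\boxtimes\Ac$ (two degree-zero generators), so the self-bracket of any class coming from $h(\Ac)$ is zero. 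Likewise $i_{dS}^2=0$ does not need Jacobi at all: $i_{dS}$ is the differential of the two-term complex $[\Theta_{\Ac}^\ell[1]\to\Ac]$ extended as a derivation of $\Sym_{dg}$, so it squares to zero on generators and hence everywhere. With these two corrections your argument is complete and matches the construction the paper implicitly relies on.
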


The following corollary explains why we called $\Nc_{S}^r$ the space of Noether gauge
symmetries.
\begin{corollary}
\label{noether-gauge-action}
The natural map
$$\Nc_{S}^r\boxtimes \Nc_{S}^r\to \Delta_{*}\Theta_{\Ac}$$
induced by the local bracket on local vector fields always factors through $\Delta_{*}\Nc_{S}^r$
and the natural map
$$\Nc_{S}^r\boxtimes \Ac^r/\Ic_{S}^r\to \Delta_{*}\Ac^r/\Ic_{S}^r$$
is a local Lie $\Ac$-algebroid action.
\end{corollary}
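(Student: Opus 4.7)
The plan is to deduce both assertions from a single input: by the preceding proposition, the Koszul--Tate differential $i_{dS}$ on $\Ac_{P}$ is an odd derivation of the local Schouten Poisson bracket. I will read everything below in the pseudo-tensor framework of Section \ref{local-operations}, so that equalities of the form $i_{dS}[X,Y]=[i_{dS}X,Y]\pm[X,i_{dS}Y]$ are to be understood as equalities of $*$-operations in $(\Mod(\Dc^{op}),P^{*}_{I})$ obtained by composing the local bracket with $\Delta_{*}i_{dS}$ in the two possible ways.

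For the factorization of the local bracket through $\Delta_{*}\Nc_{S}^{r}$, it suffices to show that the composition
$$\Nc_{S}^{r}\boxtimes \Nc_{S}^{r}\;\hookrightarrow\; \Theta_{\Ac}\boxtimes \Theta_{\Ac}\;\xrightarrow{[\cdot,\cdot]}\;\Delta_{*}\Theta_{\Ac}\;\xrightarrow{\Delta_{*}i_{dS}}\;\Delta_{*}\Ac^{r}$$
vanishes, since $\Nc_{S}^{r}=\ker(i_{dS})$ by definition and $\Delta_{*}$ is exact on quasi-coherent modules, so $\Delta_{*}\Nc_{S}^{r}=\ker(\Delta_{*}i_{dS})$. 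The derivation property of $i_{dS}$ for the Schouten bracket rewrites this composite as $(X,Y)\mapsto [i_{dS}X,Y]\pm [X,i_{dS}Y]$, which vanishes on $\Nc_{S}^{r}\boxtimes \Nc_{S}^{r}$ by construction. This gives the first assertion.

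For the Lie algebroid statement I would split things into: (i) preservation of $\Ic_{S}^{r}$ by the local Lie derivative action of $\Nc_{S}^{r}$, which ensures that the restriction of $L:\Theta_{\Ac}\boxtimes \Ac^{r}\to \Delta_{*}\Ac^{r}$ descends to a map $\Nc_{S}^{r}\boxtimes \Ac^{r}/\Ic_{S}^{r}\to \Delta_{*}\Ac^{r}/\Ic_{S}^{r}$; and (ii) the local Lie $\Ac$-algebroid axioms (anchor compatibility with the $\Ac$-module structure, Leibniz rule, and Jacobi). For (i), any section of $\Ic_{S}^{r}$ is locally of the form $i_{dS}(Z)$ with $Z\in\Theta_{\Ac}^{\ell}$, and for $X\in \Nc_{S}^{r}$ the same derivation identity gives
$$L_{X}\bigl(i_{dS}(Z)\bigr)=\bigl[X,i_{dS}(Z)\bigr]=i_{dS}\bigl([X,Z]\bigr)\,\pm\,\bigl[i_{dS}(X),Z\bigr]=i_{dS}\bigl([X,Z]\bigr)\in \Ic_{S}^{r}.$$
For (ii), the local Jacobi identity and the local Leibniz rule are built into the Schouten dg-Poisson structure on $\Ac_{P}$, so they restrict to $\Nc_{S}^{r}$ automatically once the first assertion of the corollary and (i) are in place.

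The main obstacle I expect is bookkeeping in the pseudo-tensor language rather than any new geometric input: the phrase ``$i_{dS}$ is a derivation of the local Schouten bracket'' must be established as an identity of $*$-operations in $(\Mod(\Dc^{op}),P^{*}_{I})$, and one has to check that $\Delta_{*}$ and $-\boxtimes-$ interact with the formation of the kernel $\Nc_{S}^{r}$ well enough that the factorization through $\Delta_{*}\Nc_{S}^{r}$ is a genuine morphism of $\Dc^{op}_{M\times M}$-modules rather than merely a sectionwise statement. Both of these reduce to standard exactness properties of $\Delta_{*}$ and $\boxtimes$ recalled from Beilinson--Drinfeld, together with the $\Dc$-smoothness of $\Ac$, which provides the flatness needed to push the kernel $\Nc_{S}^{r}$ through the box product.
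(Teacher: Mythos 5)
Your argument is correct and is essentially the one the paper intends: the corollary is stated as an immediate consequence of the preceding proposition that $i_{dS}$ is the differential of the local (Schouten) dg-Poisson structure on $\Ac_{P}$, and your derivation---closure of $\Nc_{S}^{r}=\ker(i_{dS})$ under the local bracket and preservation of $\Ic_{S}^{r}=\operatorname{im}(i_{dS})$ via the derivation identity, all read as identities of $*$-operations with $\Delta_{*}$ exact for the closed diagonal embedding---is exactly that route (the paper gives no further proof). Your closing worry about flatness of $\boxtimes$ is not actually needed, since the vanishing only uses functoriality of $\boxtimes$ applied to $i_{dS}\circ\iota=0$, not injectivity of $\Nc_{S}^{r}\boxtimes\Nc_{S}^{r}\to\Theta_{\Ac}\boxtimes\Theta_{\Ac}$.
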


The trivial Noether identities are those in the image of the natural map
$$\wedge^2\Theta^\ell_{\Ac}\to \Theta_{\Ac},$$
and these usually don't give a finitely generated $\Ac[\Dc]$-module because of
the simple fact that $\Dc\otimes_{\Oc}\Dc$ is not $\Dc$-coherent. This is a very
good reason to consider only non-trivial Noether identities, because these can
usually (i.e., in all the applications we have in mind) be given by a finitely generated
$\Ac[\Dc]$-module.

\begin{definition}
The \emph{proper derived critical space} of a gauge theory is the (derived) space
$$
\begin{array}{ccccc}
\R\Spec(\Ac/\Ic_{S}):& dg-\Ac-\Alg & \to 	  &\SSets\\
    & \Rc		&\mapsto & s\Hom_{dg-Alg_{\Dc}}(\Bc,\Rc).
\end{array}
$$
where $\Bc\overset{\sim}{\longrightarrow} \Ac/\Ic_{dS}$ is a cofibrant resolution of $\Ac/\Ic_{S}$
as a dg-$\Ac$-algebra in degree $0$.
\end{definition}

From the point of view of derived geometry, differential forms on the cofibrant resolution $\Bc$
give a definition of the cotangent complex of the $\Dc$-space morphism
$$i:\uSpec_{\Dc}(\Ac/\Ic_{S})\to \uSpec_{\Dc}(\Ac)$$
of inclusion of critical points of the action functional in the $\Dc$-space of general trajectories.
This notion of cotangent complex gives
a well behaved way to study infinitesimal deformations of the above inclusion map $i$ (see Illusie
\cite{Illusie-cotangent}), even if it is not a smooth morphism (i.e., even if the critical space is
singular).

We will see how to define a cofibrant, so-called Koszul-Tate resolution, that will have good finiteness
properties, by using generating spaces of Noether identities.
These can be defined by adapting Tate's construction
\cite{Tate4} to the local context. We are inspired here by Stasheff's paper \cite{Stasheff2}.
\begin{definition}
\label{generating-space}
A \emph{generating space of Noether identities} is a tuple $(\gfk_{S},\Ac_{n},i_{n})$
composed of
\begin{enumerate}
\item a negatively graded projective $\Ac[\Dc]$-module $\gfk_{S}$,
\item a negatively indexed family $\Ac_{n}$ of dg-$\Ac$-algebras with $\Ac_{0}=\Ac$, and
\item for each $n\leq -1$, an $\Ac[\Dc]$-linear morphism $i_{n}:\gfk_{S}^{n+1}\to Z^n\Ac_{n}$ to
the $n$-cycles of $\Ac_{n}$,
\end{enumerate}
such that if one extends $\gfk_{S}$ by setting $\gfk_{S}^1=\Theta_{\Ac}^\ell$
and if one sets
$$i_{0}=i_{dS}:\Theta_{\Ac}^\ell\to \Ac,$$
\begin{enumerate}
\item one has for all $n\leq 0$ an equality
$$
\Ac_{n-1}=\Sym_{\Ac_{n}}([\gfk_{S}^{n+1}[-n+1]\otimes_{\Ac}\Ac_{n}\overset{i_{n}}{\to} \underset{0}{\Ac_{n}}]),
$$
\item the natural projection map
$$\Ac_{KT}:=\limind \Ac_{n}\to \Ac/\Ic_{S}$$
is a cofibrant resolution, called the \emph{Koszul-Tate algebra},
whose differential is denoted $d_{KT}$.
\end{enumerate}
\end{definition}

We are now able to define the right regularity properties for a given gauge theory.
These finiteness properties are imposed to make the generating space of Noether identities
dualizable as an $\Ac[\Dc]$-module (resp. as a graded $\Ac[\Dc]$-module).
Without any regularity hypothesis, the constructions given by homotopical Poisson reduction of gauge theories,
the so-called derived covariant phase space, don't give $\Ac$-algebras, but only $\R$-algebras,
that are too poorly behaved and infinite dimensional to be of any (even theoretical) use.
We thus don't go through the process of their definition, that is left to the interested reader.

We now recall the language used by physicists (see for example \cite{Henneaux-Teitelboim}) to
describe the situation. This can be useful to relate our constructions to the one
described in physics books.
\begin{definition}
\label{strongly-regular-generating-space}
A gauge theory is called \emph{regular} if there exists a generating space
of Noether identities $\gfk_{S}$ whose components are finitely generated and projective.
It is called \emph{strongly regular} if this regular generating space is a bounded graded
module. Suppose given a regular gauge theory.
Consider the inner dual graded space (well-defined because of the regularity
hypothesis)
$$\gfk_{S}^\circ:=\Homc_{\Ac[\Dc]}(\gfk_{S},\Ac[\Dc])^\ell.$$
\begin{enumerate}
\item The generators of $\Theta_{\Ac}^\ell$ are called \emph{antifields} of the theory.
\item The generators of $\gfk_{S}$ of higher degree are called \emph{antighosts}, or
(non-trivial) \emph{higher Noether identities} of the theory.
\item The generators of the graded $\Ac^r[\Dc^{op}]$-module $\gfk_{S}^r$ are called
(non-trivial) \emph{higher gauge symmetries} of the theory.
\item The generators of the graded $\Ac[\Dc]$-module $\gfk_{S}^\circ$ are called \emph{ghosts} of the theory.
\end{enumerate}
\end{definition}

Remark that the natural map $\gfk_{S}^{0,r}\to \Nc_{S}^r\subset \Theta_{\Ac}$ identifies order
zero gauge symmetries with (densities of) local vector fields that induce tangent vectors
to the $\Dc$-space $\uSpec_{\Dc}(\Ac/\Ic_{S})$ of solutions to the Euler-Lagrange equation.
This explains the denomination of higher gauge symmetries for $\gfk_{S}^r$.

We now define an important invariant of gauge theories, called the Batalin-Vilkovisky bigraded algebra.
This will be used in next section on the derived covariant phase space.
\begin{definition}
Let $\gfk_{S}$ be a regular generating space of the Noether gauge symmetries.
The bigraded $\Ac[\Dc]$-module
$$
\Vc_{BV}:=\left[
\begin{array}{ccccc}
\gfk_{S}[2] &\oplus & \Theta_{\Ac}^\ell[1] &\oplus & 0\\
&&&&\oplus\\
&&&& ^t\gfk_{S}^{\circ}[-1]
\end{array}\right],
$$
where $^t\gfk_{S}^{\circ}$ is the vertical chain graded space associated to $\gfk_{S}^\circ$,
is called the module of \emph{additional fields}. The completed bigraded symmetric algebra
$$\hat{\Ac}_{BV}:=\widehat{\Sym}_{\Ac\textrm{-bigraded}}(\Vc_{BV})$$
is called the \emph{completed Batalin-Vilkovisky algebra} of the given gauge theory.
The corresponding symmetric algebra
$$\Ac_{BV}:=\Sym_{\Ac\textrm{-bigraded}}(\Vc_{BV})$$
is called the \emph{Batalin-Vilkovisky algebra}.
\end{definition}

In practical situations, physicists usually think of ghosts and antifields as sections of
an ordinary graded bundle on spacetime itself (and not only on jet space). This idea
can be formalized by the following.
\begin{definition}
Let $\gfk$ be a regular generating space of Noether symmetries for $S\in h(\Ac)$.
Suppose that all the $\Ac[\Dc]$-modules $\gfk^i$ and $\Theta_{\Ac}^{\ell}$ are locally free on $M$.
A \emph{Batalin-Vilkovisky bundle} is a bigraded vector bundle
$$E_{BV}\to C$$
with an isomorphism of $\Ac[\Dc]$-modules
$$\Ac[\Dc]\otimes_{\Oc_{C}}\Ec_{BV}^*\to \Vc_{BV},$$
where $\Ec_{BV}$ are the sections of $E_{BV}\to C$.
The sections of the graded bundle $E_{BV}\to M$ are called the \emph{fields-antifields variables} of the theory.
\end{definition}

Recall that neither $C\to M$, nor $E_{BV}\to M$ are vector bundles in general.
To illustrate the above general construction by a simple example, suppose that the action $S\in h(\Ac)$
has no non-trivial Noether identities, meaning that for all $k\geq 1$, one has $H^k(\Ac_{P})=0$.
In this case, one gets
$$\Vc_{BV}=\Theta_{\Ac}^\ell[1]$$
and the relative cotangent bundle $E_{BV}:=T^*_{C/M}\to C$ gives a BV bundle because
$$\Theta_{\Ac}^\ell\cong \Ac[\Dc]\otimes_{\Oc_{C}}\Theta_{C/M}.$$
The situation simplifies further if $C\to M$ is a vector bundle because then,
the vertical bundle $VC\subset TC\to M$, given by the kernel of $TC\to \pi^*TM$,
is isomorphic to $C\to M$. Since one has $T^*_{C/M}\cong (VC)^*$, one gets a natural isomorphism
$$E_{BV}\cong C\oplus C^*$$
of bundles over $M$. This linear situation is usually used as a starting point for the definition
of a BV theory (see for example Costello's book \cite{Costello}). Starting from a non-linear
bundle $C\to M$, one can linearize the situation by working with the bundle
$$C^{linear}_{x_{0}}:=x_{0}^*T_{C/M}\to M$$
with $x_{0}:M\to C$ a given solution of the equations of motion (sometimes called the vacuum).

\begin{proposition}
Let $E_{BV}\to C$ be a BV bundle. There is a natural isomorphism of bigraded algebras
$$\Jet(\Oc_{E_{BV}})\overset{\sim}{\longrightarrow} \Ac_{BV}=\Sym_{bigrad}(\Vc_{BV}).$$
\end{proposition}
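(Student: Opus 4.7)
The plan is to prove this by comparing the universal properties of both sides via Yoneda's lemma, reducing everything to a single representable functor: $\Oc_C$-linear maps out of $\Ec_{BV}^*$. The key observation to start with is that, since $E_{BV}\to C$ is a bigraded vector bundle with sheaf of sections $\Ec_{BV}$, its algebra of functions is the bigraded symmetric $\Oc_C$-algebra
$$
\Oc_{E_{BV}} \;\cong\; \Sym_{\Oc_C}(\Ec_{BV}^*),
$$
regarded as an $\Oc_M$-algebra via the composite projection $E_{BV}\to C\to M$.

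Next, I would test both candidate bigraded $\Ac$-algebras against an arbitrary bigraded $\Ac$-algebra $R$ in $(\Mod(\Dc),\otimes_{\Oc})$. On the jet side, the adjunction $\Jet\dashv \Forget$ between $\Alg_{\Oc}$ and $\Alg_{\Dc}$ is natural over $\Oc_C$, hence for algebras under $\Ac=\Jet(\Oc_C)$ one gets
$$
\Hom_{\Ac\textrm{-}\Alg_{\Dc}}(\Jet(\Oc_{E_{BV}}),R) \;\cong\; \Hom_{\Oc_C\textrm{-}\Alg_{\Oc}}(\Sym_{\Oc_C}(\Ec_{BV}^*),R) \;\cong\; \Hom_{\Oc_C\textrm{-}\Mod}(\Ec_{BV}^*,R),
$$
the second step being the defining property of the symmetric $\Oc_C$-algebra. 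On the BV-algebra side, the universal property of $\Sym_{\Ac}$ inside $(\Mod(\Dc),\otimes_{\Oc})$ gives
$$
\Hom_{\Ac\textrm{-}\Alg_{\Dc}}(\Sym_{\Ac}(\Vc_{BV}),R) \;\cong\; \Hom_{\Ac[\Dc]\textrm{-}\Mod}(\Vc_{BV},R),
$$
and substituting the defining isomorphism $\Vc_{BV}\cong \Ac[\Dc]\otimes_{\Oc_C}\Ec_{BV}^*$ of a BV bundle together with the fact that $\Ac[\Dc]\otimes_{\Oc_C}(-)$ is the free-$\Ac[\Dc]$-module functor on $\Oc_C$-modules yields
$$
\Hom_{\Ac[\Dc]\textrm{-}\Mod}(\Ac[\Dc]\otimes_{\Oc_C}\Ec_{BV}^*,R) \;\cong\; \Hom_{\Oc_C\textrm{-}\Mod}(\Ec_{BV}^*,R).
$$

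Both representable functors on bigraded $\Ac$-$\Dc$-algebras therefore coincide, and Yoneda's lemma produces a canonical isomorphism $\Jet(\Oc_{E_{BV}})\overset{\sim}{\to}\Ac_{BV}$; unwinding the two strings of adjunctions, it sends a linear fiber-coordinate on $E_{BV}$ (i.e.\ a local section of $\Ec_{BV}^*$, regarded as a generator of the jet algebra) to its tautological image in $\Vc_{BV}\subset \Ac_{BV}$. The main obstacle I expect is not conceptual but bookkeeping: one must check that each of the three adjunctions above is strictly compatible with the bigrading carried by $\Ec_{BV}^*$, and with the degree shifts $[2]$, $[1]$, $[-1]$ appearing in the definition of $\Vc_{BV}$. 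This forces one to work throughout with the evident bigraded refinements of $\Jet$ and $\Sym_{\Ac}$ rather than their ungraded versions, but once this setup is in place the argument is essentially formal.
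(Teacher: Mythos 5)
Your argument is correct and rests on exactly the same ingredients as the paper's proof: the identification $\Oc_{E_{BV}}=\Sym_{\Oc_C}(\Ec_{BV}^*)$, the jet adjunction, the universal property of $\Sym$ in $(\Mod(\Dc),\otimes)$, and the defining isomorphism $\Ac[\Dc]\otimes_{\Oc_C}\Ec_{BV}^*\cong\Vc_{BV}$. The only difference is presentational: the paper constructs the two morphisms explicitly from these universal properties and asserts they are mutually inverse, whereas you package the same chain of adjunctions through Yoneda, which makes that final verification automatic.
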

\begin{proof}
Since $E_{BV}\to C$ is a graded vector bundle concentrated in non-zero degrees, one has
$$\Oc_{E_{BV}}=\Sym_{\Oc_{C}}(\Ec_{BV}^*).$$
The natural map
$$\Ec_{BV}^*\to \Vc_{BV}$$
induces a morphism
$$\Oc_{E_{BV}}=\Sym_{\Oc_{C}}(\Ec_{BV}^*)\to \Ac_{BV}.$$
Since $\Ac_{BV}$ is a $\Dc$-algebra, one gets a natural morphism
$$\Jet(\Oc_{E_{BV}})\to \Ac_{BV}=\Sym_{bigrad}(\Vc_{BV}).$$
Conversely, the natural map $\Ec_{BV}^*\to \Oc_{E_{BV}}$ extends
to an $\Ac[\Dc]$-linear map
$$\Vc_{BV}\to \Jet(\Oc_{E_{BV}}),$$
that gives a morphism
$$\Ac_{BV}\to \Jet(\Oc_{E_{BV}}).$$
The two constructed maps are inverse of each other.
\end{proof}

The main interest of the datum of a BV bundle is that it allows to work with non-local functionals
of the fields and antifields variables. This is important for the effective renormalization of gauge theories,
that involves non-local functionals.

\begin{definition}
Let $E_{BV}\to C$ be a BV bundle. Denote $\A^1(A):=A$ the graded affine space.
The space of \emph{non-local functionals of the fields-antifields}
is defined by
$$\Oc_{BV}:=\uHom(\uGamma(M,E_{BV}),\A^1)$$
of (non-local) functionals on the space of sections of $E_{BV}$.
The image of the natural map
$$h(\Ac_{BV})\cong h(\Jet(\Oc_{E_{BV}}))\longrightarrow \Oc_{BV}$$
is called the space of \emph{local functionals of the fields-antifields}
and denoted $\Oc_{BV}^{qloc}\subset \Oc_{BV}$.
\end{definition}

\subsection{The derived covariant phase space}
\label{covariant-phase-space}
In all this section, we set $\pi:C\to M$, $H\subset \uGamma(M,C)$, $\Ac=\Jet(\Oc_{C})$ and $S\in h(\Ac)$
a gauge theory. Suppose given a strongly regular generating space of Noether symmetries
$\gfk_{S}$ for $S$, in the sense of definitions \ref{generating-space} and
\ref{strongly-regular-generating-space}.

The idea of the BV formalism is to define a (local and odd) Poisson dg-$\Ac$-algebra $(\Ac_{BV},D,\{.,.\})$
whose spectrum $\R\uSpec_{\Dc}(\Ac_{BV},D)$ can be though as a kind of
homotopical space of leaves
$$\R\Spec(\Ac/\Ic_{S})\underset{\R}{/}\Nc_{S}^r$$
of the foliation induced by the action (described in corollary \ref{noether-gauge-action})
of Noether gauge symmetries $\Nc_{S}^r$
on the derived critical space $\R\uSpec_{\Dc}(\Ac/\Ic_{S})$.
It is naturally equipped with a homotopical Poisson structure,
which gives a nice starting point for quantization. From this point of view, the above space is
a wide generalization of the notion extensively used by DeWitt in his covariant approach to
quantum field theory \cite{DeWitt} called the covariant phase space. This explains the title of this section.

We will first define the BV Poisson dg-algebra by using only a generating space
for Noether identities, and explain in more details in the next section how this
relates to the above intuitive statement.

\begin{proposition}
The local Lie bracket and local duality pairings
$$
[.,.]:\Theta_{\Ac}\boxtimes \Theta_{\Ac}\to \Delta_{*}\Theta_{\Ac}\hspace{1cm}\textrm{ and }\hspace{1cm}
\langle.,.\rangle:(\gfk_{S}^n)^r\boxtimes ({\gfk_{S}^n}^\circ)^r\to \Delta_{*}\Ac^r,\;n\geq 0,
$$
induce an odd local Poisson bracket
$$
\{.,.\}:\hat{\Ac}_{BV}^r\boxtimes\hat{\Ac}_{BV}^r\to \Delta_{*}\hat{\Ac}_{BV}^r
$$
called the BV-antibracket on the completed BV algebra 
$$\hat{\Ac}_{BV}=\widehat{\Sym}_{bigrad}\left(\left[\begin{array}{ccccc}
\gfk_{S}[2] &\oplus & \Theta_{\Ac}^\ell[1] &\oplus & \Ac\\
&&&&\oplus\\
&&&& ^t\gfk_{S}^{\circ}[-1]
\end{array}\right]\right)
$$
and on the BV algebra $\Ac_{BV}$.
\end{proposition}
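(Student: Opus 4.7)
The plan is to mimic the finite-dimensional Schouten construction from Section \ref{toy-model-BV}, lifted to the pseudo-tensor setting of Section \ref{local-operations} so that the resulting bracket is genuinely local (a $*$-operation in the sense of Beilinson--Drinfeld). First I would define the bracket on the space $\Vc_{BV}$ of generators by specifying its non-zero components between each pair of bigraded summands. On $\Theta_{\Ac}^\ell[1]\boxtimes \Theta_{\Ac}^\ell[1]$ I use the given local Lie bracket $[.,.]$; on $\Theta_{\Ac}^\ell[1]\boxtimes \Ac$ I use the associated Lie derivative $L:\Theta_{\Ac}^\ell\boxtimes \Ac^r\to \Delta_{*}\Ac^r$ obtained from the inclusion $\Theta_{\Ac}\hookrightarrow \Homc_{\Dc}(\Ac,\Ac[\Dc])$ recalled in Section \ref{local-operations}; and on $\gfk_{S}[2]\boxtimes {}^t\gfk_{S}^\circ[-1]$ I use the local duality pairing $\langle.,.\rangle$. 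All remaining pairs of generators are declared to bracket to zero. The bidegree shifts $[2]$, $[1]$, $[-1]$ were built into $\Vc_{BV}$ so that each of these three elementary pairings has total odd parity, which is exactly what is required for an antibracket.

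Next I would extend from generators to $\Ac_{BV}$ and $\hat{\Ac}_{BV}$ using the graded Leibniz rule in the pseudo-tensor category. Given the $*$-operation on $\Vc_{BV}\boxtimes \Vc_{BV}$ defined above, a standard induction on the bigraded symmetric filtration produces a unique biderivation
$$\{.,.\}:\hat{\Ac}_{BV}^r\boxtimes\hat{\Ac}_{BV}^r\longrightarrow \Delta_{*}\hat{\Ac}_{BV}^r$$
whose restriction to the generators gives back the data. Existence uses that each defining pairing is itself a $*$-operation and that $\hat{\Ac}_{BV}$ is a completed symmetric algebra on bigraded $\Ac[\Dc]$-modules; uniqueness is formal. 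Compatibility with the completion is automatic, since the Leibniz rule evaluated on a product of $k$ generators involves only a finite sum of terms of filtration bounded by $k$, so convergence in $\hat{\Ac}_{BV}$ is never an issue. The same formulas obviously restrict to a bracket on the uncompleted $\Ac_{BV}$.

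The main obstacle is to verify graded antisymmetry and the graded Jacobi identity for the $*$-bracket. Antisymmetry on generators reduces to three classical facts: antisymmetry of $[.,.]$ on $\Theta_{\Ac}$; the sign-flipped symmetry of the Lie derivative pairing after the shift $\Theta_{\Ac}^\ell[1]$; and the compatibility of the local duality pairing between $\gfk_{S}^n$ and $\gfk_{S}^{n,\circ}$ with the interchange of arguments modulo the shifts $[2]$ and $[-1]$. Jacobi on generators reduces to the Jacobi identity for the local Lie bracket on $\Theta_{\Ac}$, the bracket--Lie derivative compatibility $[L_X,L_Y]=L_{[X,Y]}$ (itself a consequence of the Jacobi identity in $\Homc_{\Dc}(\Ac,\Ac[\Dc])$), and naturality of the duality pairing under the $\Theta_{\Ac}^\ell$-action on $\gfk_{S}$ and $\gfk_{S}^\circ$. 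Once Jacobi is known on generators, it propagates to all of $\hat{\Ac}_{BV}$ because both sides of the Jacobiator are biderivations in each argument; an induction on symmetric degree then concludes. The delicate point throughout is the book-keeping of signs, degree shifts and the $\Ber_{M}$-twist relating $\Dc$- and $\Dc^{op}$-modules, and this is where the pseudo-tensor formalism of $(\Mod(\Dc^{op}),P^*)$ pays off: the identities can be checked directly in that category, never resorting to local coordinate formulas.
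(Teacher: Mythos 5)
Your proposal is correct and follows exactly the route the paper intends: the paper in fact states this proposition without an explicit proof, leaning on the finite-dimensional Schouten construction of the toy model (Section \ref{toy-model-BV}) and on Beilinson--Drinfeld's pseudo-tensor formalism, and your argument---defining the bracket on generators via the local Lie bracket, the Lie-derivative action of $\Theta_{\Ac}^\ell$ on $\Ac$, and the $\gfk_{S}$/$\gfk_{S}^{\circ}$ duality pairing, then extending as a biderivation to $\widehat{\Sym}$ and checking antisymmetry and Jacobi there---is precisely that argument spelled out. The one point to phrase carefully is that ``all remaining pairs of generators bracket to zero'' only makes sense for a chosen (local, de Rham--closed) system of generators of the projective $\Ac[\Dc]$-modules $\gfk_{S}$ and $\gfk_{S}^{\circ}$, not as a vanishing of module pairings, since Leibniz forces $\{X,fc\}=(Xf)c$ for $f\in\Ac$; this is the same choice the paper itself makes in the proof of the master-equation theorem, so your sketch is at the level of rigor of the text.
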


\begin{definition}
Let $\gfk_{S}$ be a regular generating space of Noether identities.
A \emph{formal solution to the classical master equation} is an
$S_{cm}\in h(\hat{\Ac}_{BV})$ such that
\begin{enumerate}
\item the degree $(0,0)$ component of $S_{cm}$ is $S$,
\item the component of $S_{cm}$ that is linear in the ghost variables,
denoted $S_{KT}$, induces the Koszul-Tate differential $d_{KT}=\{S_{KT},.\}$
on antifields of degrees $(k,0)$, and
\item the \emph{classical master equation}
$$\{S_{cm},S_{cm}\}=0$$
(meaning $D^2=0$ for $D=\{S_{cm},.\}$) is fulfilled in $h(\hat{\Ac}_{BV})$.
\end{enumerate}
A \emph{solution to the classical master equation} is a formal solution that comes from
an element in $h(\Ac_{BV})$.
\end{definition}


The main theorem of homological perturbation theory, given in a physical language
in Henneaux-Teitelboim \cite{Henneaux-Teitelboim}, Chapter 17  (DeWitt indices), can be formulated
in our language by the following.

\begin{theorem}
Let $\gfk_{S}$ be a regular generating space of Noether symmetries. 
There exists a formal solution to the corresponding classical master equation,
constructed through an inductive method.
If $\gfk_S$ is further strongly regular and the inductive method ends after finitely many
steps, then there exists a solution to the classical master equation.
\end{theorem}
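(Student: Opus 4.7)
The plan is to build $S_{cm}\in h(\hat{\Ac}_{BV})$ by induction on the pure ghost polynomial degree, using the acyclicity of the Koszul-Tate resolution to absorb successive obstructions, exactly in parallel with the finite dimensional toy model of Section \ref{toy-model-BV}. The starting datum is $S_0:=S$, a density of pure ghost degree $0$. Its Koszul-Tate partner $S_{KT}$ is constructed degree by degree in the antighost number: for each $n\geq 0$ one pairs the antighost generators in bidegree $(n+1,0)$ of $\Vc_{BV}$ with the dual ghost generators in bidegree $(-1,n+1)$, using the local duality pairings $\langle\cdot,\cdot\rangle$ and the structure maps $i_n:\gfk_S^{n+1}\to Z^n\Ac_n$ of the generating space (Definition \ref{generating-space}). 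The BV antibracket is then calibrated so that $\{S_{KT},\cdot\}$ restricted to the Koszul-Tate sub-dg-algebra $\Ac_{KT}\subset\hat{\Ac}_{BV}$ coincides with $d_{KT}$, which handles conditions (1) and (2) of the definition of a formal solution.

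For the inductive step, suppose we have built $S^{(N)}=S+S_{KT}+\sum_{j=2}^N S_j$ with each $S_j$ of pure ghost polynomial degree $j$ and such that the class $\frac{1}{2}\{S^{(N)},S^{(N)}\}\in h(\hat{\Ac}_{BV})$ has pure ghost polynomial degree $\geq N+1$. Let $R_{N+1}$ denote its component of pure ghost degree exactly $N+1$. The graded Jacobi identity for the local antibracket, applied to $\{S^{(N)},\{S^{(N)},S^{(N)}\}\}=0$ and projected onto degree $N$, reads modulo higher-degree corrections as $\{S_{KT},R_{N+1}\}=0$, so $R_{N+1}$ is a $d_{KT}$-cocycle supported in strictly positive antifield/antighost degree. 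Acyclicity of the Koszul-Tate resolution $\Ac_{KT}\overset{\sim}{\to}\Ac/\Ic_S$ in positive degree, together with the projectivity of the $\gfk_S^{-k}$ as $\Ac[\Dc]$-modules postulated in the regularity hypothesis, produces an $S_{N+1}$ with $d_{KT}S_{N+1}=-R_{N+1}$. Setting $S^{(N+1)}:=S^{(N)}+S_{N+1}$ advances the master equation one degree further, and the inductive limit defines a formal solution $S_{cm}\in h(\hat{\Ac}_{BV})$.

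The principal obstacle is the bookkeeping needed to guarantee that $R_{N+1}$ actually lands in the acyclic range of $(\Ac_{KT},d_{KT})$. One must track how the two bigrading indices (antifield/antighost number versus ghost number) propagate under the antibracket of terms drawn from $\gfk_S[2]\oplus\Theta_{\Ac}^\ell[1]\oplus{}^t\gfk_S^{\circ}[-1]$, and check that at each step the obstruction has strictly positive antifield/antighost degree, where Koszul-Tate cohomology vanishes. The projectivity clause in the regularity hypothesis is essential here: it promotes the cohomological vanishing to the existence of actual cochains $S_{N+1}$, rather than a mere vanishing of $\Ext^1$ obstruction classes.

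For the second assertion, strong regularity of $\gfk_S$ means it is bounded, so $\Vc_{BV}$ is bounded in both directions and $\Ac_{BV}$ itself (without completion) is well-behaved. If moreover the inductive procedure terminates after finitely many steps, i.e.\ all $R_{N}$ vanish for $N$ larger than some $N_0$, then $S_{cm}$ is the finite sum $S+S_{KT}+\sum_{j=2}^{N_0}S_j$, whose image lies in $h(\Ac_{BV})\subset h(\hat{\Ac}_{BV})$, providing the desired genuine (non-formal) solution to the classical master equation.
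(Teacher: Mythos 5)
Your proposal is correct and follows essentially the same route as the paper's proof: first calibrate $S_{KT}$ so that $\{S_{KT},\cdot\}$ realizes $d_{KT}$ on the Koszul--Tate generators, then run the standard homological perturbation induction where the obstruction at each step is $d_{KT}$-closed by the Jacobi identity for the odd bracket and hence exact by acyclicity of the Koszul--Tate resolution, concluding that finite termination under strong regularity yields a genuine solution in $h(\Ac_{BV})$. The only cosmetic difference is bookkeeping (you index the induction by ghost polynomial degree while the paper recurses on Koszul--Tate/antighost degree via $2d_{KT}(S_k)+D_{k-1}=0$, and the paper reduces to locally free generators rather than invoking projectivity directly), which does not change the argument.
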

\begin{proof}
One can attack this theorem conceptually using the general setting of homotopy
transfer for curved local $L_{\infty}$-algebroids (see Schaetz's paper \cite{Schaetz}
for a finite dimensional analog).
We only need to prove the theorem when $\gfk$ 
has all $\gfk^i$ given by free $\Ac[\Dc]$-modules
of finite rank since this is true locally on $M$.
We start by extending $S$ to a generator of the Koszul-Tate differential
$d_{KT}:\Ac_{KT}\to \Ac_{KT}$. Remark that the BV bracket with $S$ on $\Ac_{BV}$ already identifies with
the insertion map
$$\{S,.\}=i_{dS}:\Theta_{\Ac}^\ell\to \Ac.$$
We want to define $S_{KT}:=\sum_{k\geq 0} S_{k}$ with $S_{0}=S$ such that
$$\{S_{KT},.\}=d_{KT}:\Ac_{KT}\to \Ac_{KT}.$$
Let $C_{\alpha_{i}}^*$ be generators of the free $\Ac[\Dc]$-modules $\gfk^i$
and $C^{\alpha_{i}}$ be the dual generators of the free $\Ac[\Dc]$-modules
$(\gfk^i)^\circ$. We suppose further that all these generators correspond to closed elements
for the de Rham differential.
Let $n_{\alpha_{i}}:=d_{KT}(C_{\alpha_{i}}^*)$ in $\Ac_{KT}$.
Then setting $S_{k}=\sum_{\alpha_{k}}n_{\alpha_{k}}C^{\alpha_{k}}$, one gets
$$
\begin{array}{ccc}
\{S_{i},C^*_{\alpha_{i}}\} 	& = & \{n_{\alpha_{i}}C^{\alpha_{i}},C^*_{\alpha_{i}}\}\\
						& = & n_{\alpha_{i}}\\
						& = & d_{KT}(C^*_{\alpha_{i}})
\end{array}
$$
so that $\{S_{KT},.\}$ identifies with $d_{KT}$ on $\Ac_{KT}$.
Now let $m_{\alpha_{j}}$ denote the coordinates of $n_{\alpha_{i}}$
in the basis $C^*_{\alpha_{i}}$, so that
$$n_{\alpha_{i}}=\sum_{j}m_{\alpha_{j}}C^*_{\alpha_{j}}.$$
One gets in these coordinates
$$S_{i}=\sum_{\alpha_{i},\alpha_{j}}C^*_{\alpha_{j}}m_{\alpha_{j}}C^{\alpha_{i}}.$$
The next terms in $S=\sum_{k\geq 0}S_{k}$ are determined by the recursive equation
$$2d_{KT}(S_{k})+D_{k-1}=0$$
where
$D_{k-1}$ is the component of Koszul-Tate degree (i.e., degree in the variables $C^*_{\alpha_{i}}$)
$k-1$ in $\{R_{k-1},R_{k-1}\}$, with
$$R_{k-1}=\sum_{j\leq k-1}S_{j}.$$
These equations have a solution because $D_{k-1}$ is $d_{KT}$-closed, because
of Jacobi's identity for the odd bracket $\{.,.\}$ and since $d_{KT}$ is exact on the Koszul-Tate
components (because it gives, by definition, a resolution of the critical ideal), these are also exact.
If we suppose that the generating space $\gfk_S$ is strongly regular (i.e., bounded) and
the inductive process ends after finitely many steps, one can choose the solution $S$ in $h(\Ac_{BV})$.
\end{proof}

\section*{Acknowledgements}
We refer to the article \cite{Fred-histories-and-observables} for detailed acknowledgements and more
references on this work, that is its direct continuation (with improvements and simplifications).
Special thanks are due to Jim Stasheff for his detailed comments of loc. cit., to the referee, and to my students
at IMPA and Jussieu's master class,  that allowed me to improve both presentation and
results. I also thank the editors, and in particular Urs Schreiber, for giving me the opportunity to publish
in this book.

\bibliographystyle{alpha}
\bibliography{$HOME/Documents/travail/fred}

\end{document}